\newtheorem{theorem}{Theorem}
\newtheorem{proposition}{Proposition}
\newtheorem{corollary}{Corollary}
\newtheorem{example}{Example}
\newtheorem{remark}{Remark}
\begin{document}
\baselineskip 3.8ex

\title{Multi-Stage Robust Chinese  Remainder Theorem}


\author{\mbox{Li Xiao$^*$, Xiang-Gen Xia$^*$, and Wenjie Wang$^\dag$}
\thanks{$^*$Department of
Electrical and Computer Engineering,
University of Delaware, Newark, DE 19716 U.S.A. (e-mail:
\{lixiao, xxia\}@ee.udel.edu). Their work was supported in part
by the Air Force Office of Scientific Research (AFOSR) under
Grant FA9550-12-1-0055.}
\thanks{$^\dag$School of Electronics and Information Engineering,
Xi'an Jiaotong University, Xi'an, 710049, China (e-mail:
wjwang@xjtu.edu.cn).
 His work was supported in part by the Natural Science
Foundation of China (NSFC) No. 60971113.}
}

\maketitle

\begin{abstract}
It is well-known that the traditional Chinese remainder theorem (CRT) is
not robust in the sense that a small error in a remainder may cause
a large error in the reconstruction solution.
A robust CRT was recently proposed for a special case when the greatest
common divisor (gcd) of all the moduli
is more than $1$ and the remaining integers factorized by the gcd
of all the moduli are co-prime.
In this special case, a closed-form reconstruction
from erroneous remainders was proposed and a necessary and
sufficient condition on the remainder errors was obtained.
It basically says that the reconstruction error
is upper bounded by the remainder error level $\tau$
if $\tau$ is smaller than a quarter of
the gcd of all the moduli.
In this paper, we consider the robust reconstruction
problem for a  general set of moduli. We first
present a necessary and sufficient condition
for the remainder errors for a robust reconstruction
from erroneous remainders with a general set of muduli
and also a corresponding robust reconstruction method.
This can be thought of as a single stage robust CRT.
We then propose a two-stage robust CRT by grouping the moduli
into several groups as follows. First, the single stage robust
CRT is applied to each group. Then, with these robust reconstructions
from all the groups, the single stage robust  CRT is applied again
across the groups.
This is then easily generalized to multi-stage robust CRT.
Interestingly, with this two-stage robust
CRT, the robust reconstruction holds
even when the remainder error level $\tau$ is above the quarter of
the gcd of all the moduli.
In this paper, we also propose an algorithm on how to
group a set of moduli for a better reconstruction
robustness of the two-stage robust CRT  in some special cases.
\end{abstract}

\begin{keywords}
Chinese remainder theorem, robustness, frequency estimation from undersamplings, greatest common divisor, moduli
\end{keywords}

\section{Introduction}
The problem of reconstructing a large integer from its several remainders
modulo several smaller positive integers (called
moduli) may occur in many applications, such as phase
unwrapping in radar signal processing \cite{zhuang,xu,jorgensen,gwang,ruegg,yimin,dias,qi,gangli1,gangli2,xwli3} and frequency determination
from several undersampled waveforms \cite{xia1,kind1c}.
The traditional solution for this problem is the Chinese remainder
theorem (CRT), see for example, \cite{CRT1,CRT2}, that uniquely formulates
the solution from the remainders if all the moduli are co-prime
and the large integer is less than the product of all the moduli.
When the moduli are not co-prime, the large integer can be uniquely
determined if it is less than the least common multiple (lcm)
of all the moduli \cite{ore,newcrt}, where one may also find the
reconstruction methods. However, it is well-known that
the above solution is not robust in the sense
that a small error in a remainder may cause
a large error in the reconstruction solution,
which may degrade the performance of its applications in
phase unwrapping and frequency determination, since in these
applications, signals are usually noisy and the detected remainders
may be erroneous. For the robustness, there have been several studies recently
\cite{xia,xwli2,wjwang}.
Robust reconstructions from erroneous remainders
were recently proposed in \cite{xwli2,wjwang}
 for a special case when the greatest
common divisor (gcd) of all the moduli
is more than $1$ and the remaining integers factorized by the gcd
of all the moduli are co-prime.
In this special case, a closed-form reconstruction
from erroneous remainders was proposed in \cite{wjwang}
 and a necessary and
sufficient condition on the remainder errors
was also obtained in \cite{wjwang}.
It basically says that the reconstruction error
is upper bounded by the remainder error level $\tau$
if $\tau$ is smaller than a quarter of
the gcd of all the moduli \cite{xwli2}.
A special version of this result was obtained earlier in \cite{zhuang}. In some applications, an unknown, such as the phase unwrapping and frequency estimation, is real valued in general. So, in \cite{wjwang} the closed-form robust CRT algorithm was naturally generalized to real numbers. 
Also, a lattice based method was proposed in
\cite{wenchao} to address the problem of 
estimating a real  unknown distance  with a closed-form algorithm 
 using phase measurements taken at multiple co-prime wavelengths.
 One can see that there are  constraints on the moduli in previous works. 
The constraints on the moduli may, however, limit the robustness
when the range (called dynamic range)
of the determinable integers is roughly fixed.

Different from robustly reconstructing the large integer from its erroneous
remainders, another existing approach is to accurately determine
the large integer by using some of the error-free remainders among all the
remainders \cite{kind1a,kind1d,kind1c},
which may require that significantly many remainders are error-free
and a large number of moduli/remainders may be needed.
This approach may sacrifice the dynamic range for a given set of moduli
(or undersampling rates \cite{kind1c}) and furthermore, in some signal
processing applications,
to obtain error-free remainders may not be even possible,
because observed signals are usually noisy.
A  probabilistic
approach to deal with noises in CRT was proposed in \cite{kind1b},
where all the moduli are required to be primes.

In this paper, we consider the robust reconstruction
problem for a  general set of moduli on which  the constraint
used in \cite{xwli2,wjwang} is no longer required.
 We first
present a necessary and sufficient condition
for the remainder errors for a robust reconstruction
from erroneous remainders with a general set of muduli,
where a reconstruction method is also proposed.
This can be thought of as a single stage robust CRT.
We then propose a two-stage robust CRT by grouping the moduli
into several smaller groups as follows. First, the robust single stage
CRT is applied to each group. Then, with these robust reconstructions
from all the groups, the robust single stage CRT is applied again
across the groups. Interestingly, with this two-stage robust
CRT, the robust reconstruction holds
even when the remainder error level $\tau$ is above the quarter of
the gcd of all the moduli.
The two-stage robust CRT
 is then easily generalized to multi-stage robust CRT.
In this paper, we also propose an algorithm on how to
group a set of moduli for the better reconstruction
robustness of the two-stage robust CRT in some special cases.

Note that the two-stage robust CRT is first appeared in
\cite{binyang} that is, however, based on the special single stage
robust CRT in \cite{wjwang} when the remaining factors
of all the moduli after factorizing out their gcd are co-prime.
With the two-stage robust CRT obtained in \cite{binyang}, the remainder error
level $\tau$ is, in fact, not better than the quarter of the
gcd of all the moduli.
In contrast,
our newly proposed two-stage or multi-stage robust CRT in this paper
is based on the generalized single stage robust CRT for arbitrary
moduli also newly obtained in this paper
 and as mentioned earlier, the remainder error level
$\tau$ can be above the quarter of the
gcd of all the moduli, i.e., it achieves a better robustness bound
than \cite{binyang} does.

This paper is organized as follows.
In Section \ref{sec2}, we first briefly
introduce the robust CRT results obtained in \cite{xwli2,wjwang}.
We then propose our new single stage robust CRT
with the necessary and sufficient condition for a general set of moduli.
In Section \ref{sec3}, we propose  two-stage and multi-stage robust CRT.
In Section \ref{sec4}, we propose an algorithm on how to group
a set of moduli for a better reconstruction robustness of the two-stage robust CRT. In Section \ref{sec5}, we present some simulation results on estimating integers with a general set of moduli. In Section \ref{sec6}, we conclude this paper.

\section{Single Stage Robust CRT}\label{sec2}

Let us first see the robust remaindering problem.
Let $N$ be a positive integer, $0<M_1<M_2<\cdots<M_L$ be $L$ moduli, and $r_1, r_2, \cdots, r_L$ be the  $L$ remainders of $N$, i.e.,
\begin{equation}\label{CRT}
N \equiv r_i \mbox{ mod } M_i \:\mbox{ or }\: N=n_iM_i+r_i,
\end{equation}
where $0\leq r_i<M_i$ and $n_i$ is an unknown integer, for $1\leq i\leq L$.
It is not hard to see that
$N$ can be uniquely reconstructed from its $L$ remainders $r_i$ if and only if $0\leq N< \mbox{lcm}\left(M_1, M_2, \cdots, M_L\right)$. If all the moduli $M_i$ are co-prime, then the Chinese Remainder Theorem
(CRT) provides  a simple reconstruction formula \cite{CRT1}, \cite{CRT2}.

The problem we are interested in this paper is how to robustly reconstruct $N$ when the remainders $r_i$ have errors:
\begin{equation}\label{remainderbound}
0 \leq \tilde{r}_i\leq M_i-1 \:\mbox{ and }\: |\tilde{r}_i-r_i| \leq \tau,
\end{equation}
where $\tau$ is an error level that may be
 determined by, for example,
 the signal-to-noise ratio (SNR) and is also called remainder error bound.
Now we want to reconstruct $N$ from these erroneous remainders $\tilde{r}_i$ and the known moduli $M_i$. The basic idea for the robust CRT
in the recent studies and also this paper
is to accurately determine the unknown integers $n_i$ in
 (\ref{CRT}) which are the folding numbers that may cause large errors in the reconstructions if they are erroneous. Therefore, the problem is to
correctly determine the folding numbers $n_i$ from these erroneous remainders $\tilde{r}_i$.

Once $n_i$ for $1\leq i\leq L$ are correctly found, an
 estimate of $N$ can be given by
\begin{equation}\label{estimatevalue}
\hat{N}(i)=n_iM_i+\tilde{r}_i=
n_iM_i+r_i+\Delta r_i, \mbox{ for any } i, \:\:1\leq i\leq L,
\end{equation}
where $\Delta r_i=\tilde{r}_i-r_i$ denote the errors of the remainders. From (\ref{remainderbound}), $|\Delta r_i|\leq \tau$. Then,
an estimate of the unknown parameter $N$ is
the average of $\hat{N}(i)$:
\begin{equation}\label{averagen}
\begin{array}{lll}
\hat{N}&=&\displaystyle\left[\frac{1}{L}\sum_{i=1}^{L}\hat{N}(i)\right]\\
&=&\displaystyle N+\left[\frac{1}{L}\sum_{i=1}^{L}\Delta r_i\right]\\
&=&\displaystyle N+\Delta \bar{r},
\end{array}
\end{equation}
where $\Delta \bar{r}$ is the average of the remainder errors, and $\left[\cdot\right]$ stands for the rounding integer, i.e., for any $x\in \mathbb{R}$
(the set of all reals),  $\left[x \right]$ is an integer and subject to
\begin{equation}\label{round integer}
-\frac{1}{2}\leq x-\left[x \right]<\frac{1}{2}.
\end{equation}
Clearly in this way the error of the above estimate of $N$
is upper bounded by
\begin{equation}\label{bound of fre}
|\hat{N}-N|\leq \tau,
\end{equation}
i.e., $\hat{N}$ is a robust estimate of $N$.

For the above robust remaindering problem,
solutions, i.e.,
robust reconstruction algorithms,  have been proposed in
 \cite{xwli2, wjwang} for a special case
when the gcd of all the moduli is more than $1$ and the remaining integers factorized by the gcd of all the moduli are co-prime. The main results
can be briefly described below.

Let $M$ be the gcd of all the moduli $M_i$ in (\ref{CRT}). Then $M_i=M\Gamma_i$, $1\leq i\leq L$,
 and assume that
all $\Gamma_i$ for $1\leq i\leq L$ are co-prime, i.e., the gcd of any pair $\Gamma_i$ and $\Gamma_j$ for $i\neq j$ is $1$. Define $\Gamma\triangleq \Gamma_1\Gamma_2\cdots\Gamma_L$. For $1\leq i\leq L$, let
\begin{equation}
\gamma_i\triangleq \Gamma_1\cdots\Gamma_{i-1}\Gamma_{i+1}\cdots\Gamma_L=\Gamma/\Gamma_i,
\end{equation}
where $\gamma_1\triangleq \Gamma_2\cdots\Gamma_L$ and $\gamma_L\triangleq \Gamma_1\cdots\Gamma_{L-1}$. We now show how to accurately determine the folding numbers $n_i$  in \cite{xwli2} and \cite{wjwang}, respectively. First, define
\begin{equation}
S_i\triangleq \Big\{(\bar{n}_1, \bar{n}_i)={\arg\min}_{\begin{subarray}{ll}
\hat{n}_1=0, 1, \cdots, \gamma_1-1\\
\hat{n}_i=0, 1, \cdots, \gamma_i-1
\end{subarray}
}|\hat{n}_iM_i+\tilde{r}_i-\hat{n}_1M_1-\tilde{r}_1|\Big\}.
\end{equation}
Let $S_{i, 1}$ denote the set of all the first components $\bar{n}_1$ of the pairs $(\bar{n}_1, \bar{n}_i)$ in set $S_i$, i.e.,
\begin{equation}
S_{i, 1}\triangleq\{\bar{n}_1| (\bar{n}_1, \bar{n}_i)\in S_i \:\:\mbox{for some} \:\:\bar{n}_i\}
\end{equation}
and define
\begin{equation}
S\triangleq \bigcap_{i=2}^{L}S_{i, 1}.
\end{equation}
It is proved in \cite{xwli2} that if the remainder error bound $\tau$ is smaller than a quarter of $M$, i.e., $\tau<M/4$,
 the folding numbers $n_i$ for $1\leq i\leq L$ can be accurately determined from $S$ and $S_i$. Set $S$ defined above contains only one element $n_1$, and furthermore if $(n_1, \bar{n}_i)\in S_i$, then $\bar{n}_i=n_i$.
In addition, \cite{xwli2} has proposed a $1$-D searching method with the order of $2(L-1)\Gamma_i$ searches.
When $L$ or $\Gamma_i$ gets large, the searching complexity is still high.
Then, a closed-form robust CRT algorithm and its necessary and sufficient condition for it to hold have been proposed in \cite{wjwang}.
For the closed-form algorithm, we refer the reader to \cite{wjwang}
with which the following necessary and sufficient condition
for the accurate determination of the folding numbers $n_i$
is obtained in \cite{wjwang}.

\begin{proposition}\label{wjwtheorem}
\cite{wjwang}
Assume that all $\Gamma_i$, for $1\leq i\leq L$ are  co-prime and
\begin{equation}
0\leq N< \mbox{lcm}\left(M_1, M_2, \cdots, M_L\right)=M\Gamma_1\Gamma_2\cdots\Gamma_L.
\end{equation}
Then, with the closed-form algorithm determining $\hat{n}_i$ for $1\leq i \leq L$ in
\cite{wjwang},
$\hat{n}_i=n_i$ for $1\leq i \leq L$,
i.e., the folding numbers $n_i$ for $1\leq i \leq L$ can be accurately
determined,
 if and only if
\begin{equation}\label{conditionwjw}
-M/2\leq \Delta r_i-\Delta r_1< M/2, \:\mbox{for all} \:\:2\leq i \leq L.
\end{equation}
\end{proposition}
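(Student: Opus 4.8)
The plan is to spell out the structure underlying the closed-form algorithm of \cite{wjwang} and then reduce the equivalence to a single rounding estimate. First I would factor out the common modulus $M$: writing $N=nM+r^{(0)}$ with $n=\lfloor N/M\rfloor$ and $r^{(0)}=N\bmod M$, the identities $N=n_iM_i+r_i=n_iM\Gamma_i+r_i$ together with $0\le r_i<M\Gamma_i$ force $r_i\equiv r^{(0)}\pmod M$ for every $i$, and, setting $q_i\triangleq\lfloor r_i/M\rfloor\in\{0,1,\dots,\Gamma_i-1\}$, they force $n=n_i\Gamma_i+q_i$, i.e. $n\equiv q_i\pmod{\Gamma_i}$. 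Subtracting the $i=1$ relation from the general one gives the exact integer identity $q_i-q_1=(r_i-r_1)/M$. I would then single out the quantity $m\triangleq n-q_1=n_1\Gamma_1$, which satisfies the linear congruence system $m\equiv q_i-q_1\pmod{\Gamma_i}$ for $1\le i\le L$ (the $i=1$ congruence being just $\Gamma_1\mid m$), and which obeys $0\le m=n_1\Gamma_1<\Gamma_1\Gamma_2\cdots\Gamma_L$ because $N<\mbox{lcm}(M_1,\dots,M_L)=M\Gamma_1\cdots\Gamma_L$ gives $n_1<\Gamma_2\cdots\Gamma_L$. Since the $\Gamma_i$ are pairwise co-prime, the classical CRT makes $m$ the unique solution of this system in $[0,\Gamma_1\cdots\Gamma_L)$, and once $m$ is known every folding number is recovered by $n_i=(m-(q_i-q_1))/\Gamma_i$.

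Next I would describe precisely what can be formed from the erroneous data. The only quantities available are the rounded differences
\[
d_i\triangleq\left[\frac{\tilde r_i-\tilde r_1}{M}\right],\quad 2\le i\le L,\qquad d_1\triangleq 0,
\]
which are fed into the classical CRT to produce the unique $\hat m\in[0,\Gamma_1\cdots\Gamma_L)$ with $\hat m\equiv d_i\pmod{\Gamma_i}$ for all $i$, after which the algorithm outputs $\hat n_i=(\hat m-d_i)/\Gamma_i$ (well defined since $\Gamma_i\mid\hat m-d_i$ by construction, and in particular $\hat n_1=\hat m/\Gamma_1$). The bridge between truth and estimate is the one-line computation
\[
\frac{\tilde r_i-\tilde r_1}{M}=\frac{r_i-r_1}{M}+\frac{\Delta r_i-\Delta r_1}{M}=(q_i-q_1)+\frac{\Delta r_i-\Delta r_1}{M},
\]
whose first summand is an integer; by the defining property (\ref{round integer}) of $\left[\cdot\right]$ this shows that $d_i=q_i-q_1$ holds \emph{if and only if} $-M/2\le\Delta r_i-\Delta r_1<M/2$.

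The two implications then follow quickly. If (\ref{conditionwjw}) holds, then $d_i=q_i-q_1$ for every $i$, so the system solved by the algorithm is exactly the one solved by the true $m$; both $\hat m$ and $m$ lie in $[0,\Gamma_1\cdots\Gamma_L)$, hence $\hat m=m$ and $\hat n_i=(m-(q_i-q_1))/\Gamma_i=n_i$ for all $i$. Conversely, suppose $\hat n_i=n_i$ for all $i$. From $\hat n_1=\hat m/\Gamma_1=n_1=m/\Gamma_1$ we get $\hat m=m$, and then $\hat n_i=(\hat m-d_i)/\Gamma_i=(m-d_i)/\Gamma_i$ being equal to $n_i=(m-(q_i-q_1))/\Gamma_i$ forces $d_i=q_i-q_1$ for every $i$; by the rounding equivalence just established, this is exactly condition (\ref{conditionwjw}).

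The main difficulty I anticipate is not conceptual but lies in the bookkeeping that makes the two CRT invocations legitimate: verifying $0\le n_1\Gamma_1<\Gamma_1\cdots\Gamma_L$ so that the classical CRT returns the intended representative $m$ (this is precisely where the hypothesis $N<\mbox{lcm}(M_1,\dots,M_L)$ is used), checking that $\Gamma_i\mid\hat m-d_i$ so that the outputs $\hat n_i$ are genuine integers, and keeping the half-open convention in (\ref{round integer}) aligned with the half-open interval in (\ref{conditionwjw}). It also remains to confirm that the explicit closed-form formulas of \cite{wjwang} are an instance of the scheme above — which they are, up to the cosmetic choice of which remainder serves as the reference — after which the proof is complete.
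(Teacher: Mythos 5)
Your proof is correct, and it is essentially the route the paper itself takes: the paper does not reprove Proposition~\ref{wjwtheorem} (it imports it from \cite{wjwang}), but its proof of Theorem~\ref{them1} is exactly this argument for general moduli, with $\mbox{gcd}(M_1,M_i)$ playing the role of your $M$; your condition $d_i=q_i-q_1\iff -M/2\le\Delta r_i-\Delta r_1<M/2$ is the specialization of the step $\hat q_{i1}=q_{i1}\iff$ (\ref{condition1}) there. The only substantive difference is packaging. The paper keeps the reference index $1$ outside the congruence system, writes $n_1=q_{i1}\overline{\Gamma}_i+kq_{i1}\Gamma_{i1}$ via a B\'ezout inverse, and must then split the necessity into two cases (wrong $\hat q_{j1}$ detected modulo $\Gamma_{j1}$, giving $\hat n_1\ne n_1$, versus wrong $\hat q_{j1}$ invisible modulo $\Gamma_{j1}$, giving $\hat n_1=n_1$ but $\hat n_j\ne n_j$). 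You instead lift to $m=n_1\Gamma_1$ with the extra congruence $d_1=0$ modulo $\Gamma_1$ and recover every folding number from the exact identity $n_i=(m-(q_i-q_1))/\Gamma_i$, which collapses the necessity into one chain ($\hat n_1=n_1\Rightarrow\hat m=m$, then $\hat n_i=n_i\Rightarrow d_i=q_i-q_1$) and also guarantees integrality of the outputs by construction. Both arguments rely on the same range check $n_1\Gamma_1<\Gamma_1\cdots\Gamma_L$ coming from $N<\mbox{lcm}(M_1,\dots,M_L)$, and on the half-open rounding convention (\ref{round integer}) matching the half-open interval in (\ref{conditionwjw}); your bookkeeping of these points is accurate.
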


Although the condition (\ref{conditionwjw}) in Proposition \ref{wjwtheorem} is necessary and sufficient for the uniqueness of the solution of the folding numbers $n_i$, it involves with two remainder errors and is hard to check in practice. However, with this result the following proposition  becomes obvious,
which coincides with the much simpler sufficient condition
 in \cite{xwli2}.
\begin{proposition}\label{wjwcorollary}
\cite{xwli2, wjwang}
Assume that all $\Gamma_i$ for $1\leq i\leq L$ are  co-prime and
\begin{equation}
0\leq N< \mbox{lcm}\left(M_1, M_2, \cdots, M_L\right)=M\Gamma_1\Gamma_2\cdots\Gamma_L.
\end{equation}
If the remainder error bound $\tau$ satisfies
\begin{equation}
\tau< M/4,
\end{equation}
then we have $\hat{n}_i=n_i$ for $1\leq i \leq L$, i.e.,
the folding numbers $n_i$ for $1\leq i \leq L$ can be accurately
determined.
\end{proposition}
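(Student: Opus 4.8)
The plan is to obtain this as an immediate consequence of the necessary and sufficient condition in Proposition~\ref{wjwtheorem}. The two standing hypotheses --- the co-primality of the $\Gamma_i$ and the dynamic-range bound $0\le N<\mathrm{lcm}(M_1,\dots,M_L)=M\Gamma_1\Gamma_2\cdots\Gamma_L$ --- are assumed here verbatim, so it suffices to show that the extra assumption $\tau<M/4$ forces condition (\ref{conditionwjw}), namely $-M/2\le \Delta r_i-\Delta r_1<M/2$ for all $2\le i\le L$; Proposition~\ref{wjwtheorem} then yields $\hat n_i=n_i$ for $1\le i\le L$ in one stroke.

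The only computation needed is a triangle-inequality estimate on the difference of two remainder errors. From the error model (\ref{remainderbound}) we have $|\Delta r_j|=|\tilde r_j-r_j|\le\tau$ for every $j$, in particular for $j=1$ and for $j=i$. Hence, for each $2\le i\le L$,
\begin{equation}
|\Delta r_i-\Delta r_1|\le |\Delta r_i|+|\Delta r_1|\le 2\tau< \frac{M}{2},
\end{equation}
so that $-M/2<\Delta r_i-\Delta r_1<M/2$, which is strictly stronger than the half-open requirement (\ref{conditionwjw}). Thus (\ref{conditionwjw}) holds for all $i$, and the claim follows.

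Because the entire substance is carried by Proposition~\ref{wjwtheorem}, there is essentially no obstacle; the only point worth a moment's care is that the estimate $2\tau<M/2$ must be \emph{strict} (it uses $\tau<M/4$, not $\tau\le M/4$), which is precisely what is needed so that both endpoints of the interval in (\ref{conditionwjw}) --- including the left endpoint $-M/2$, the only one attained as an equality case in that condition --- are respected. If one prefers to avoid invoking the closed-form algorithm of \cite{wjwang}, the same conclusion can instead be read off from the set-intersection characterization recalled just before Proposition~\ref{wjwtheorem}: under $\tau<M/4$ the set $S$ collapses to the singleton $\{n_1\}$ and the matching second components of the pairs in $S_i$ recover the remaining folding numbers; but the route through Proposition~\ref{wjwtheorem} is the shortest and is the one I would present.
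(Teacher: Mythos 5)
Your proof is correct and matches the paper's intent exactly: the paper states Proposition~\ref{wjwcorollary} as an immediate consequence of Proposition~\ref{wjwtheorem} ("with this result the following proposition becomes obvious"), and the triangle-inequality step $|\Delta r_i-\Delta r_1|\le 2\tau<M/2$ is precisely the argument the paper spells out in its proof of the generalized version, Corollary~\ref{111}. Nothing is missing.
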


As it was mentioned earlier,
these robust reconstruction results
are
based on
the assumption that the gcd $M$
 of all the moduli is more than $1$ and the remaining integers
$\Gamma_i$ in the moduli $M_i$
factorized by their gcd $M$
are  co-prime. For example,  $M_1=5\cdot 5=25$, $M_2=5\cdot 7=35$,
$M_3=5\cdot 16=80$, and $M_4=5\cdot 19=95$, where $M=5$. When the remainder error
level $\tau<5/4$, any integer less than $5\cdot 5 \cdot 7 \cdot 16\cdot 19$ can be reconstructed within the same error level as the remainders
from the erroneous remainders by using the algorithms
in \cite{xwli2, wjwang}.
A natural question is what will happen if a general set of
moduli $M_i$ are used. For example, what will happen if
$M_1=5\cdot 14=70$, $M_2=5\cdot 15=75$,
$M_3=5\cdot 16=80$,
and $M_4=5\cdot 18=90$? First of all, their gcd is $M=5$ and if we
divide them by their gcd, we get $\Gamma_1=14$, $\Gamma_2=15$,
$\Gamma_3=16$, and $\Gamma_4=18$
and clearly these four  $\Gamma_i$ are not co-prime.
So, we can not apply the  algorithms or results in \cite{xwli2, wjwang}
directly, which may limit the applications in practice.

We next propose an accurate determination algorithm for the folding
numbers $n_i$ from erroneous remainders for a general set of moduli
$M_i$ with a new necessary and sufficient condition on the remainder
errors. Let us first see an algorithm for $n_i$.


Following the algorithm  in \cite{wjwang}, we can  generalize the results
as follows.
First, from (\ref{CRT}) we can equivalently write it as the following system of congruences:
\begin{equation}\label{fangchengzu}
\left\{\begin{array}{ll}
N=n_1M_1+r_1\\
N=n_2M_2+r_2\\
\:\:\:\:\vdots\\
N=n_LM_L+r_L.
\end{array}\right.
\end{equation}
We want to determine $n_i$ for $1\leq i \leq L$. To do so, we let the last $L-1$ equations in (\ref{fangchengzu}) subtract the first one and we then have
\begin{equation}\label{fangchengzu2}
\left\{\begin{array}{ll}
n_1M_1-n_2M_2=r_2-r_1\\
n_1M_1-n_3M_3=r_3-r_1\\
\:\:\:\:\vdots\\
n_1M_1-n_LM_L=r_L-r_1.
\end{array}\right.
\end{equation}
Next, denote
$$m_{1i}=\mbox{gcd}\left(M_1, M_i\right),\,\,\,
\Gamma_{1i}=\frac{M_1}{m_{1i}}, \,\,\, \Gamma_{i1}=\frac{M_i}{m_{1i}},
\mbox{ and } q_{i1}=\frac{r_i-r_1}{m_{1i}}.$$
 Then, we can equivalently express equation (\ref{fangchengzu2}) again as
\begin{equation}\label{fangchengzu3}
\left\{\begin{array}{ll}
n_1\Gamma_{12}-n_2\Gamma_{21}=q_{21}\\
n_1\Gamma_{13}-n_3\Gamma_{31}=q_{31}\\
\:\:\:\:\vdots\\
n_1\Gamma_{1L}-n_L\Gamma_{L1}=q_{L1}.
\end{array}\right.
\end{equation}
Since $\Gamma_{1i}$ and $\Gamma_{i1}$ are co-prime, by B\'{e}zout's lemma
(Lemma 1 in \cite{wjwang}) we get
\begin{equation}\label{fangchengzu4}
\left\{\begin{array}{ll}
n_1=q_{i1}\overline{\Gamma}_i+kq_{i1}\Gamma_{i1}\\
n_i=\frac{q_{i1}(\overline{\Gamma}_i\Gamma_{1i}-1)}{\Gamma_{i1}}+kq_{i1}\Gamma_{1i},
\end{array}\right.
\end{equation}
where $1\leq i \leq L$, $k\in \mathbb{Z}$ (the set of integers) and $\overline{\Gamma}_i$ is the modular multiplicative inverse of $\Gamma_{1i}$ modulo $\Gamma_{i1}$.

We can use
\begin{equation}\label{qi1}
\hat{q}_{i1}=\left[\frac{\tilde{r}_i-\tilde{r}_1}{m_{1i}}\right]=q_{i1}+\left[\frac{\Delta r_i-\Delta r_1}{m_{1i}}\right]
\end{equation}
as an estimate of $q_{i1}$
for $2\leq i\leq L$.
 Recall that $\left[\cdot\right]$ stands for the rounding integer which is defined in (\ref{round integer}).
Let $\hat{n}_{i}$ for $1\leq i\leq L$ be a set of solutions of (\ref{fangchengzu3}) when $q_{i1}$ is replaced by $\hat{q}_{i1}$ for $2\leq i\leq L$.
In summary, we have the following algorithm.

\begin{itemize}
  \item \textbf{Step 1:} Calculate these values of $m_{1i}=\mbox{gcd}\left(M_1, M_i\right)$, $\Gamma_{1i}=\frac{M_1}{m_{1i}}$ and $\Gamma_{i1}=\frac{M_i}{m_{1i}}$ for $2 \leq i\leq L$ from the given moduli $M_j$ for $1 \leq j\leq L$, which
can be done in advance.
  \item \textbf{Step 2:} Calculate $\hat{q}_{i1}$ for $2 \leq i\leq L$ in (\ref{qi1}) from the given erroneous remainders $\tilde{r}_i$ for $1 \leq i\leq L$.
  \item \textbf{Step 3:} Calculate the remainders of $\hat{q}_{i1}\overline{\Gamma}_i$ modulo $\Gamma_{i1}$, i.e.,
  \begin{equation}
   \hat{\xi}_{i1} \equiv \hat{q}_{i1}\overline{\Gamma}_i \mbox{ mod } \Gamma_{i1}
  \end{equation}
  for $2 \leq i\leq L$, where $\overline{\Gamma}_i$ is the modular multiplicative inverse of $\Gamma_{1i}$ modulo $\Gamma_{i1}$ and  can be calculated in advance.
  \item \textbf{Step 4:} Calculate $\hat{n}_{1}$ from the following system of congruences:
  \begin{equation}\label{gcrt}
  \hat{n}_1\equiv \hat{\xi}_{i1} \mbox{ mod } \Gamma_{i1}, \mbox{ for } 2 \leq i\leq L,
  \end{equation}
  where moduli $\Gamma_{i1}$ may not be co-prime,
 which can be done by using the algorithms in, for example, \cite{ore, newcrt}, and in
\cite{newcrt},  a multi-level  decoding technique to reconstruct the large integer is proposed.
  \item \textbf{Step 5:} Calculate $\hat{n}_{i}$ for $2 \leq i\leq L$:
  \begin{equation}\label{folding}
  \hat{n}_{i}=\frac{\hat{n}_{1}\Gamma_{1i}-\hat{q}_{i1}}{\Gamma_{i1}}.
  \end{equation}
\end{itemize}
With the above  algorithm,  we have the following necessary and sufficient
condition result
for a general set of moduli.
\begin{theorem}\label{them1}
Let $M_i$, $1\leq i \leq L$,
be  $L$ arbitrarily distinct positive integers as a given
set of moduli
 and $0\leq N< \mbox{lcm}\left(M_1, M_2, \cdots, M_L\right)$. Then,  $\hat{n}_i=n_i$ for all $1\leq i \leq L$, i.e.,
the folding numbers $n_i$ for $1\leq i \leq L$ can be accurately
determined,
 if and only if
\begin{equation}\label{condition1}
-\mbox{gcd} \left(M_1, M_i\right)/2\leq \Delta r_i-\Delta r_1< \mbox{gcd} \left(M_1, M_i\right)/2, \:\:\:\:\mbox{for all} \:\:\:\:2\leq i \leq L.
\end{equation}
\end{theorem}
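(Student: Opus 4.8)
The plan is to reduce the whole theorem to the single question of whether each estimate $\hat q_{i1}$ produced in Step 2 coincides with the true value $q_{i1}$. The starting point is the identity already recorded in (\ref{qi1}): $\hat q_{i1} = q_{i1} + \bigl[(\Delta r_i - \Delta r_1)/m_{1i}\bigr]$. By the definition (\ref{round integer}) of the rounding operator, $\bigl[(\Delta r_i - \Delta r_1)/m_{1i}\bigr] = 0$ precisely when $-1/2 \le (\Delta r_i - \Delta r_1)/m_{1i} < 1/2$, i.e.\ precisely when $-m_{1i}/2 \le \Delta r_i - \Delta r_1 < m_{1i}/2$. Hence condition (\ref{condition1}) holds if and only if $\hat q_{i1} = q_{i1}$ for every $2 \le i \le L$, and it suffices to prove that the five-step algorithm returns $\hat n_i = n_i$ for all $i$ exactly when $\hat q_{i1} = q_{i1}$ for all $2 \le i \le L$.

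For the ``if'' direction, assume $\hat q_{i1} = q_{i1}$ for all $i$. From (\ref{fangchengzu3}) we have $n_1 \Gamma_{1i} \equiv q_{i1} \pmod{\Gamma_{i1}}$, so multiplying by the inverse $\overline\Gamma_i$ gives $n_1 \equiv q_{i1}\overline\Gamma_i = \hat q_{i1}\overline\Gamma_i \equiv \hat\xi_{i1} \pmod{\Gamma_{i1}}$; thus $n_1$ itself solves the system (\ref{gcrt}) of Step 4. To conclude $\hat n_1 = n_1$ I will invoke that a consistent system of congruences with moduli $\Gamma_{21},\dots,\Gamma_{L1}$ has a unique solution in $[0,\,\mbox{lcm}(\Gamma_{21},\dots,\Gamma_{L1}))$ (the content of the reconstruction in \cite{ore,newcrt}), together with the arithmetic identity $\mbox{lcm}(\Gamma_{21},\dots,\Gamma_{L1}) = \mbox{lcm}(M_1,\dots,M_L)/M_1$. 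The latter I would verify prime-by-prime: writing $a_i$ for the exponent of a prime $p$ in $M_i$, one has $v_p(\Gamma_{i1}) = \max(0, a_i - a_1)$, so $\max_{i\ge 2} v_p(\Gamma_{i1}) = \max_i a_i - a_1$, which is the exponent of $p$ in $\mbox{lcm}(M_1,\dots,M_L)/M_1$. Since $0 \le n_1 = \lfloor N/M_1\rfloor < \mbox{lcm}(M_1,\dots,M_L)/M_1$ by the hypothesis $0 \le N < \mbox{lcm}(M_1,\dots,M_L)$, it follows that $\hat n_1 = n_1$, and then Step 5 gives $\hat n_i = (n_1\Gamma_{1i} - q_{i1})/\Gamma_{i1} = n_i$ for $2 \le i \le L$ by (\ref{fangchengzu3}).

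For the ``only if'' direction I argue the contrapositive: suppose (\ref{condition1}) fails, so $\hat q_{j1} \ne q_{j1}$ for some $j$. If the system (\ref{gcrt}) is inconsistent, Step 4 produces no $\hat n_1$ and the folding numbers are not determined. Otherwise Step 4 returns some $\hat n_1$, and I suppose for contradiction that $\hat n_i = n_i$ for all $i$; then $\hat n_1 = n_1$, and the Step 5 formula (\ref{folding}) at index $j$ gives $n_j\Gamma_{j1} = \hat n_j\Gamma_{j1} = n_1\Gamma_{1j} - \hat q_{j1}$, whereas (\ref{fangchengzu3}) gives $n_j\Gamma_{j1} = n_1\Gamma_{1j} - q_{j1}$. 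Subtracting yields $\hat q_{j1} = q_{j1}$, contradicting the choice of $j$. Either way the algorithm fails to recover all $n_i$ correctly.

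I expect the only genuinely non-routine step to be the uniqueness argument in the ``if'' direction: identifying $\mbox{lcm}(\Gamma_{21},\dots,\Gamma_{L1})$ with $\mbox{lcm}(M_1,\dots,M_L)/M_1$ and confirming that Step 4 selects the representative of the solution class lying in $[0,\,\mbox{lcm}(\Gamma_{i1}))$, which is what forces it to coincide with $n_1$. Everything else is bookkeeping with the identity $\hat q_{i1} = q_{i1} + \bigl[(\Delta r_i - \Delta r_1)/m_{1i}\bigr]$ and the linear relations in (\ref{fangchengzu3}).
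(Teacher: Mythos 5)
Your proof is correct and follows the paper's overall strategy: reduce the theorem to the statement that $\hat q_{i1}=q_{i1}$ for all $i$, prove sufficiency via the consistency and uniqueness of the congruence system (\ref{gcrt}) under the generalized CRT, and derive the necessity from the linear relations (\ref{fangchengzu3}). Two points differ in a way worth noting. First, your necessity argument is cleaner than the paper's: the paper splits into two cases according to whether the rounding offset $\left[(\Delta r_j-\Delta r_1)/m_{1j}\right]$ is or is not a multiple of $\Gamma_{j1}$ (in the first case showing $\hat n_1\neq n_1$, in the second showing $\hat n_1=n_1$ but $\hat n_j\neq n_j$), whereas you avoid the case split entirely by assuming all $\hat n_i=n_i$ and extracting $\hat q_{j1}=q_{j1}$ directly from the Step 5 formula (\ref{folding}), a contradiction; this is logically equivalent but more economical, at the cost of not pinpointing \emph{which} folding number fails. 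Second, in the sufficiency direction you make explicit what the paper dismisses with ``it is not hard to see'': the prime-by-prime identity $\mbox{lcm}(\Gamma_{21},\dots,\Gamma_{L1})=\mbox{lcm}(M_1,\dots,M_L)/M_1$ together with $n_1=\lfloor N/M_1\rfloor$, which is exactly what guarantees that the representative returned by Step 4 in $[0,\mbox{lcm}(\Gamma_{21},\dots,\Gamma_{L1}))$ must be $n_1$ itself. Both additions strengthen rather than alter the argument; no gaps.
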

\begin{proof}
We first prove the sufficiency. Considering the condition in (\ref{condition1}) and the estimate of $q_{i1}$ in (\ref{qi1}),
from (\ref{round integer}) for the definition of the operator $[\cdot]$
we have $\hat{q}_{i1}=q_{i1}$. Then, from equation (\ref{fangchengzu4}), $n_1$ and $\hat{q}_{i1}\overline{\Gamma}_i$ have the same remainder modulo $\Gamma_{i1}$. Since $\hat{q}_{i1}$, $\overline{\Gamma}_i$ and $\Gamma_{i1}$
are  known, we can calculate $\hat{q}_{i1}\overline{\Gamma}_i \equiv \hat{\xi}_{i1} \mbox{ mod } \Gamma_{i1}$. Thus, $n_1\equiv \hat{\xi}_{i1} \mbox{ mod } \Gamma_{i1}$ for $2\leq i\leq L$, which form a system of simultaneous congruences as $\hat{n}_1\equiv \hat{\xi}_{i1} \mbox{ mod } \Gamma_{i1}$. In addition,
since $n_1M_1\leq N<$lcm$(M_1,M_2,\cdots,M_L)$, it is not hard to see that
$n_1$ is  less than $\mbox{lcm} \left( \Gamma_{21}, \Gamma_{31}, \cdots, \Gamma_{L1}\right)$. So, according to the algorithm about generalized CRT in \cite{newcrt}, $n_1$ can be uniquely reconstructed by solving the above system, and $n_1=\hat{n}_{1}$.

After $n_1$ is determined, we can obtain other integers $n_i$ for $2\leq i\leq L$ from equations (\ref{fangchengzu3}) or (\ref{fangchengzu4}). Therefore, $\hat{n}_i=n_i$ for $2\leq i\leq L$. Hence, the sufficiency is proved.

We next prove the necessity. Assume that there exists at least one remainder that does not satisfy (\ref{condition1}). For example, the $j$-th remainder
$\tilde{r}_j$, $2\leq j\leq L$,
does not satisfy (\ref{condition1}).
This equivalently leads to
 $\left[\left(\Delta r_j-\Delta r_1\right)/m_{1j}\right]\neq 0$ and
therefore $\hat{q}_{j1}\neq q_{j1}$. We then have the following two cases.\\
\textbf{Case A:} When $\left[\left(\Delta r_j-\Delta r_1\right)/m_{1j}\right]\neq k\Gamma_{j1}$ for any $k\in \mathbb{Z}$. We want to prove that the remainders of $\hat{q}_{j1}\overline{\Gamma}_j$ and $q_{j1}\overline{\Gamma}_j$ modulo $\Gamma_{j1}$ are different. Assume $\hat{q}_{j1}\overline{\Gamma}_j$ and $q_{j1}\overline{\Gamma}_j$ have the same remainder modulo $\Gamma_{j1}$, i.e.,
\begin{equation}\label{aaaaa}
\hat{q}_{j1}\overline{\Gamma}_j-q_{j1}\overline{\Gamma}_j=k\Gamma_{j1}, \:\:\mbox{for some}\:\: k \in \mathbb{Z}.
\end{equation}
Multiplying both sides of (\ref{aaaaa}) by $\Gamma_{1j}$ and considering $\Gamma_{1j}\overline{\Gamma}_j=1+k\Gamma_{j1}$ for some $k \in \mathbb{Z}$, we have
\begin{equation}
\hat{q}_{j1}-q_{j1}=k\Gamma_{j1}, \:\:\mbox{for some}\:\: k \in \mathbb{Z}.
\end{equation}
According to (\ref{qi1}), we have
\begin{equation}
\left[\left(\Delta r_j-\Delta r_1\right)/m_{1j}\right]=k\Gamma_{j1}, \:\:\mbox{for some}\:\: k \in \mathbb{Z}.
\end{equation}
This contradicts with the assumption. Hence, the remainders of $\hat{q}_{j1}\overline{\Gamma}_j$ and $q_{j1}\overline{\Gamma}_j$ modulo $\Gamma_{j1}$ are different, i.e.,  $n_1$ and $\hat{n}_1$ have different congruences. Thus, $n_1\neq \hat{n}_1$.\\
\textbf{Case B:} For every $2\leq i\leq L$, $\left[\left(\Delta r_i-\Delta r_1\right)/m_{1i}\right]=k\Gamma_{i1}$ for some $k \in \mathbb{Z}$ but there exists at least one $j$ with $2\leq j\leq L$ such that $\left[\left(\Delta r_j-\Delta r_1\right)/m_{1j}\right]\neq 0$,  i.e., $\hat{q}_{j1}\neq q_{j1}$.
From equation (\ref{qi1}), we have $\hat{q}_{i1}\overline{\Gamma}_i \equiv q_{i1}\overline{\Gamma}_i \mbox{ mod } \Gamma_{i1}$ for $2\leq i\leq L$.
Hence, from the first equation in (\ref{fangchengzu4})
and  according to the generalized CRT, $n_1$ can be uniquely reconstructed.
Thus, from Steps 1-4 in the above algorithm,
 we have $\hat{n}_1=n_1$.

However, since $\hat{q}_{j1}\neq q_{j1}$, from equations (\ref{fangchengzu3}) or the second equation
in (\ref{fangchengzu4}) we have $n_j\neq \hat{n}_j$. This proves the necessity.
\end{proof}

The above result involves with two remainder errors and is hard to check in practice. Let $\tau$ be the maximal remainder error level, i.e., $|\Delta r_i|=|\tilde{r}_i-r_i|\leq \tau$, for $1\leq i \leq L$. Similar to Proposition \ref{wjwcorollary}, we can also present a simpler sufficient condition.
\begin{corollary}\label{111}
Let $M_i$, $1\leq i \leq L$,
be  $L$ arbitrarily distinct positive integers as a given
set of moduli
 and $0\leq N< \mbox{lcm}\left(M_1, M_2, \cdots, M_L\right)$.
If the remainder error bound $\tau$ satisfies
\begin{equation}\label{tttttt}
\tau< \max_{1\leq i\leq L}\min_{1\leq j\neq i\leq L} \frac{\mbox{gcd}\left(M_i, M_j\right)}{4},
\end{equation}
then, we have $\hat{n}_i=n_i$ for all $1\leq i \leq L$, i.e.,
the folding numbers $n_i$ for $1\leq i \leq L$ can be accurately
determined.
\end{corollary}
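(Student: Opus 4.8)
The plan is to reduce Corollary~\ref{111} to Theorem~\ref{them1} by choosing, as the "pivot" modulus that plays the role of $M_1$ in \eqref{fangchengzu2}, an index that attains the outer maximum in \eqref{tttttt}. First I would observe that nothing in the algorithm of Steps~1--5, nor in the statement of Theorem~\ref{them1}, depends on the particular labeling of the moduli: the role of $M_1$ versus the remaining $M_i$ is symmetric up to a permutation of the index set. Consequently, for \emph{any} fixed index $i_0$ with $1\le i_0\le L$, the same argument yields $\hat n_i=n_i$ for all $i$ provided that
\[
-\mbox{gcd}\left(M_{i_0},M_j\right)/2\le \Delta r_j-\Delta r_{i_0}<\mbox{gcd}\left(M_{i_0},M_j\right)/2,\qquad\mbox{for all }j\neq i_0.
\]

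Next I would pick $i_0$ to be an index achieving $\max_{1\le i\le L}\min_{1\le j\neq i\le L}\mbox{gcd}(M_i,M_j)$, so that $\mbox{gcd}(M_{i_0},M_j)\ge \max_{i}\min_{j\neq i}\mbox{gcd}(M_i,M_j)$ for every $j\neq i_0$. Then, for each such $j$, the triangle inequality together with the remainder-error bound $|\Delta r_k|\le\tau$ gives
\[
|\Delta r_j-\Delta r_{i_0}|\le |\Delta r_j|+|\Delta r_{i_0}|\le 2\tau<\frac{\mbox{gcd}(M_{i_0},M_j)}{2},
\]
where the final strict inequality is precisely $\tau<\mbox{gcd}(M_{i_0},M_j)/4$, which is implied by the hypothesis \eqref{tttttt} and the choice of $i_0$. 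In particular both one-sided inequalities in the displayed condition above hold simultaneously --- the strict one on the right, and a fortiori the non-strict one on the left --- so the hypothesis of Theorem~\ref{them1} (applied with $M_{i_0}$ as the pivot) is satisfied, and therefore $\hat n_i=n_i$ for all $1\le i\le L$.

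The single point that deserves to be stated explicitly rather than glossed over is the relabeling step: Theorem~\ref{them1} is written with $M_1$ singled out, but the corollary's bound \eqref{tttttt} is genuinely larger than what one obtains by insisting on keeping $M_1$ fixed, so the proof must exploit the freedom to choose the pivot. I would therefore make clear that "the folding numbers can be accurately determined" in Corollary~\ref{111} means "there exists a choice of pivot modulus for which the algorithm preceding Theorem~\ref{them1} succeeds," and note that this freedom is legitimate because the whole construction \eqref{fangchengzu2}--\eqref{folding} is invariant under permuting the indices. Beyond this observation the argument is just the elementary triangle-inequality estimate above, so I do not anticipate any further obstacle.
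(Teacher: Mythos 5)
Your proposal is correct and follows essentially the same route as the paper: the paper's proof likewise notes that the choice of $M_1$ as the reference equation in (\ref{fangchengzu2}) was arbitrary, relabels so that the reference modulus attains the max--min in (\ref{tttttt}), and then applies the triangle inequality $|\Delta r_i-\Delta r_1|\leq 2\tau<\mbox{gcd}(M_1,M_i)/2$ to verify condition (\ref{condition1}) of Theorem \ref{them1}. Your explicit remark about the half-open interval (strict on the right, non-strict on the left) is a minor point the paper glosses over, but the substance is identical.
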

\begin{proof}
Recall that in the procedure of proving Theorem \ref{them1} we just arbitrarily selected the first equation in (\ref{fangchengzu}) to be a reference to be subtracted from the  other equations to
get (\ref{fangchengzu2}). In fact, to improve the robustness through selecting a proper reference equation to differentiate, we can choose the index $i$ such that $\min\limits_j \mbox{gcd}\left(M_i, M_j\right)\geq \min\limits_j \mbox{gcd}\left(M_l, M_j\right)$ for $l\neq i$. Without loss of generality, modulus $M_1$ satisfies $$\max\limits_{1\leq i\leq L}\min\limits_{1\leq j\neq i\leq L} \frac{\mbox{gcd}\left(M_i, M_j\right)}{4}=\min\limits_{1\leq j\neq 1\leq L} \frac{\mbox{gcd}\left(M_1, M_j\right)}{4}.$$ Then, we have $\tau<\mbox{gcd}(M_1, M_i)/4$ for $i\neq1$.
Since $\tau$ is the maximal remainder error level, i.e., $|\Delta r_i|=|\tilde{r}_i-r_i|\leq \tau$, for $1\leq i \leq L$, we can obtain
\begin{equation}\label{bbbb}
|\Delta r_i-\Delta r_1|<\mbox{gcd}\left(M_1, M_i\right)/2, \mbox{ for } 2\leq i \leq L.
\end{equation}
Clearly, equation (\ref{bbbb}) implies the sufficient condition (\ref{condition1}) in Theorem \ref{them1}. Hence, $\hat{n}_i=n_i$ for all $1\leq i \leq L$. Therefore, we complete the proof.
\end{proof}

\begin{remark}
Since in the above new result,
 there is no any constraint to the moduli $M_i$ for  $1\leq i \leq L$,
some of the moduli may be  redundant with respect to the
range $0\leq N< \mbox{lcm}\left(M_1, M_2, \cdots, M_L\right)$
of the determinable unknown integer $N$.
The first case is when there exist a pair of moduli $M_{i_1}$ and $M_{i_2}$ such that $M_{i_1}=nM_{i_2}$ for $n \in \mathbb{N}$ (the set of all positive integers) and in this case
$M_{i_2}$ is redundant for the determinable range of $N$, i.e., the lcm of
all  $M_i$.
 The other case is when there exists one moduli $M_{i_3}$
that  is a factor of some other (more than one)
moduli's lcm, i.e., $\mbox{lcm}\left(M_{i_4}, M_{i_5}, \cdots, M_{i_k}\right)=nM_{i_3}$ for some $n \in \mathbb{N}$ and $k>4$,
and in this case $M_{i_3}$ is redundant similarly.
When a determinable range of $N$ is fixed,
we  can add or delete some of the redundant moduli to or from the moduli set
 in order to get a better robustness bound for $\tau$.
 For example, the redundant modulus $30$ in moduli set
$\{20, 45, 30\}$ improves the robustness bound
compared with the robustness bound of
moduli set $\{20, 45\}$ from $5/4$ to $10/4$.
On the other hand, the redundant modulus $10$ in $\{10, 45, 30\}$
with its robustness bound $10/4$ does not
help but  worsens the robustness bound compared with
$15/4$ of $\{45, 30\}$, so it is better to delete the modulus $10$ from the moduli set. Below is a general result.
\end{remark}

\begin{corollary}\label{cor2}
If there exist a pair of moduli $M_{i_1}$ and $M_{i_2}$ such that $M_{i_1}=nM_{i_2}$ for $n \in \mathbb{N}$, then,
the redundant modulus $M_{i_2}$ does not help to increase
 the robustness bound and it can be deleted from the set of moduli.
\end{corollary}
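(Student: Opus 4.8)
The plan is to read ``robustness bound'' as the quantity $B(\mathcal M)\triangleq\frac14\max_{M_i\in\mathcal M}\min_{M_j\in\mathcal M,\,j\neq i}\gcd(M_i,M_j)$ furnished by Corollary \ref{111}, and to show that passing from the original moduli set $\mathcal M=\{M_1,\dots,M_L\}$ to the reduced set $\mathcal M'=\mathcal M\setminus\{M_{i_2}\}$ neither shrinks the dynamic range nor decreases $B$, i.e. $B(\mathcal M')\ge B(\mathcal M)$. First I would record two elementary consequences of the hypothesis $M_{i_1}=nM_{i_2}$, i.e. $M_{i_2}\mid M_{i_1}$: (i) since $M_{i_2}\mid M_{i_1}\mid\mbox{lcm}(\mathcal M')$ we have $\mbox{lcm}(\mathcal M')=\mbox{lcm}(\mathcal M)$, so the range $0\le N<\mbox{lcm}(\cdot)$ of determinable integers is unchanged; and (ii) for every modulus $M_j$, any common divisor of $M_{i_2}$ and $M_j$ divides $M_{i_1}$ as well, hence $\gcd(M_{i_2},M_j)\mid\gcd(M_{i_1},M_j)$ and in particular $\gcd(M_{i_2},M_j)\le\gcd(M_{i_1},M_j)$.

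Next I would prove $B(\mathcal M')\ge B(\mathcal M)$ by a case analysis on the index $i^*$ attaining the outer maximum in $B(\mathcal M)$. If $i^*\neq i_2$, then $M_{i^*}\in\mathcal M'$, and because $\mathcal M'\subseteq\mathcal M$ the inner minimum taken over $\mathcal M'$ is no smaller than the one taken over $\mathcal M$; hence $B(\mathcal M')\ge\frac14\min_{M_j\in\mathcal M',\,j\neq i^*}\gcd(M_{i^*},M_j)\ge\frac14\min_{M_j\in\mathcal M,\,j\neq i^*}\gcd(M_{i^*},M_j)=B(\mathcal M)$. If $i^*=i_2$, I would instead use $M_{i_1}\in\mathcal M'$ as a substitute index: any $M_j\in\mathcal M'$ with $j\neq i_1$ automatically satisfies $j\neq i_2$, so by (ii) $\gcd(M_{i_1},M_j)\ge\gcd(M_{i_2},M_j)\ge\min_{M_k\in\mathcal M,\,k\neq i_2}\gcd(M_{i_2},M_k)$; minimizing over such $j$ and then invoking the outer maximum yields $B(\mathcal M')\ge\frac14\min_{M_j\in\mathcal M',\,j\neq i_1}\gcd(M_{i_1},M_j)\ge B(\mathcal M)$. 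Either way $B(\mathcal M')\ge B(\mathcal M)$, so deleting $M_{i_2}$ keeps the dynamic range and cannot worsen the guaranteed robustness bound, which is exactly the assertion.

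The only place where the hypothesis is genuinely used, and the step I expect to be the main obstacle, is the case $i^*=i_2$: a priori, deleting precisely the modulus that realizes the best bound could hurt, and one must exhibit a replacement index that does at least as well. The divisibility relation $\gcd(M_{i_2},M_j)\mid\gcd(M_{i_1},M_j)$ is what lets $M_{i_1}$ take over (and dominate) the role of $M_{i_2}$. I would finish by disposing of the degenerate case $L=2$, where $\mathcal M'=\{M_{i_1}\}$ is a singleton and the inner minimum is vacuous: a single modulus reconstructs any $N<M_{i_1}$ with the remainder error passed through unchanged by (\ref{bound of fre}), so no robustness is lost, while for $L\ge3$ the argument above applies verbatim.
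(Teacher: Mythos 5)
Your proof is correct and is essentially the paper's own argument run in the opposite direction: the paper shows that adjoining the redundant modulus cannot increase $\max_i\min_j\gcd(M_i,M_j)/4$, while you show that deleting it cannot decrease that quantity, using the same two-case split on where the outer maximum is attained and the same key fact $\gcd(M_{i_2},M_j)\le\gcd(M_{i_1},M_j)$ following from $M_{i_2}\mid M_{i_1}$. Your added observations that the lcm (hence the dynamic range) is preserved and that the $L=2$ case is degenerate are harmless extras not needed for the statement.
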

\begin{proof}
Without loss of generality, we can assume for a moduli set $\{M_1, M_2, \cdots, M_L\}$ that $\mbox{gcd}(M_1, M_2)\geq \mbox{gcd}(M_1, M_3)\geq \cdots\geq\mbox{gcd}(M_1, M_L)$ and the robustness bound is
$$\max\limits_{1\leq i\leq L}\min\limits_{1\leq j\neq i\leq L} \frac{\mbox{gcd}\left(M_i, M_j\right)}{4}=\frac{\mbox{gcd}(M_1, M_L)}{4}.$$

Consider another set of moduli
$\{M_1, M_2, \cdots, M_L, M_{L+1}\}$ where $M_{L+1}$ is
a factor of one moduli $M_q$ in $\{M_1, M_2, \cdots, M_L\}$, i.e., $M_{q}=nM_{L+1}$ for $n \in \mathbb{N}$, $1\leq q\leq L$.
For the moduli $\{M_1, M_2, \cdots, M_L, M_{L+1}\}$,
 its robustness bound is
$$\max\limits_{1\leq i\leq L+1}\min\limits_{1\leq j\neq i\leq L+1} \frac{\mbox{gcd}\left(M_i, M_j\right)}{4}.$$
To calculate it, we  split $1\leq i\leq L+1$ into two parts: $1\leq i\leq L$ and $i=L+1$. And,
$$
\max\limits_{1\leq i\leq L+1}\min\limits_{1\leq j\neq i\leq L+1} \frac{\mbox{gcd}\left(M_i, M_j\right)}{4}=\max\{\underbrace{\max\limits_{1\leq i\leq L}\min\limits_{1\leq j\neq i\leq L+1} \frac{\mbox{gcd}\left(M_i, M_j\right)}{4}}_A, \underbrace{\min\limits_{1\leq j\leq L} \frac{\mbox{gcd}\left(M_{L+1}, M_j\right)}{4}}_B\}.
$$
As for $A$, since
$$\min\limits_{\begin{subarray}{}
                                  1\leq j\neq i\leq L+1 \\
                                  1\leq i\leq L
                                \end{subarray}
} \frac{\mbox{gcd}\left(M_i, M_j\right)}{4}\leq \min\limits_{\begin{subarray}{}
                                  1\leq j\neq i\leq L \\
                                  1\leq i\leq L
                                \end{subarray}
} \frac{\mbox{gcd}\left(M_i, M_j\right)}{4},$$
we have
$$
A\leq \max\limits_{1\leq i\leq L}\min\limits_{1\leq j\neq i\leq L} \frac{\mbox{gcd}\left(M_i, M_j\right)}{4}=\frac{\mbox{gcd}(M_1, M_L)}{4}.
$$
As for $B$, since
$$\mbox{gcd}(M_{L+1}, M_j)\leq \mbox{gcd}(M_q, M_j) \mbox{ for }1\leq j\leq L,$$
 we have
$$
B\leq \min\limits_{1\leq j\neq q\leq L} \frac{\mbox{gcd}\left(M_{L+1}, M_j\right)}{4} \leq \max\limits_{1\leq i\leq L}\min\limits_{1\leq j\neq i\leq L} \frac{\mbox{gcd}\left(M_i, M_j\right)}{4}=\frac{\mbox{gcd}(M_1, M_L)}{4}.
$$
Thus, we can derive
$$\max\limits_{1\leq i\leq L+1}\min\limits_{1\leq j\neq i\leq L+1} \frac{\mbox{gcd}\left(M_i, M_j\right)}{4}=\max\{A, B\}\leq \max\limits_{1\leq i\leq L}\min\limits_{1\leq j\neq i\leq L} \frac{\mbox{gcd}\left(M_i, M_j\right)}{4}=\frac{\mbox{gcd}(M_1, M_L)}{4}.
$$
This tells us that the redundant modulus $M_{L+1}$ does not help
to increase  the robustness bound of the set of moduli $\{M_1, M_2, \cdots, M_L, M_{L+1}\}$ compared with that of $\{M_1, M_2, \cdots, M_L\}$.
\end{proof}

From the result of Corollary \ref{cor2}, for a set of moduli, we can delete this kind of redundant modulus $M_{i_2}$ when there exists one modulus $M_{i_1}$ in the moduli set such that $M_{i_1}=nM_{i_2}$. So, throughout this paper, a set of moduli we consider does not include such a pair of moduli in a single
stage robust CRT.

From the above results, one can see that
the choice of the reference remainder is important in determining
the maximal possible robustness bound for $\tau$ when
the whole moduli set of $L$ arbitrary moduli is considered once as above.
In fact, when the moduli satisfy the constraint, i.e.,
$\Gamma_i$ are co-prime,
in Proposition \ref{wjwcorollary} in \cite{wjwang,xwli2},
it has been pointed out and analyzed in \cite{wjwang} that
a proper reference remainder indeed plays an important role
in improving the performance in practice.

Going back to the necessary and sufficient condition (\ref{condition1}),
one can see that  the remainder error
difference bound depends on $\mbox{gcd}(M_1, M_i)$,
that  varies with each $M_i$, and the  choice of the reference modulus
$M_1$.
This means that for the robust reconstruction of $N$,
the error levels of its remainders $\tilde{r}_i$ for different $i$ may have
 different requirements. Also, as it was mentioned earlier,
using $M_1$ as the reference modulus is not necessary.
Let us choose the reference modulus $M_k$ that satisfies
 \begin{equation}\label{refer}
\min_{1\leq j\neq k\leq L} \frac{\mbox{gcd}\left(M_k, M_j\right)}{4}=\max_{1\leq i\leq L}\min_{1\leq j\neq i\leq L} \frac{\mbox{gcd}\left(M_i, M_j\right)}{4},
 \end{equation}
 and the remainder error bound $\tau_k$ for the reference remainder
$r_k$ satisfy
 \begin{equation}\label{refer1}
 |\Delta r_k|=|\tilde{r}_k-r_k|\leq \tau_k<\min\limits_{1\leq j\neq k\leq L} \frac{\mbox{gcd}\left(M_k, M_j\right)}{4}.
 \end{equation}
Then, we have the following result.
\begin{corollary}\label{coroll2}
Let $M_i$, $1\leq i \leq L$,
be  $L$ arbitrarily distinct positive integers as a given
set of moduli
 and $0\leq N< \mbox{lcm}\left(M_1, M_2, \cdots, M_L\right)$, define the remainder error bound for $r_i$ as $\tau_i$, i.e., $|\Delta r_i|=|\tilde{r}_i-r_i|\leq \tau_i$ for $1\leq i\leq L$, and the reference modulus and its corresponding remainder error bound are $M_k$ and $\tau_k$ satisfying (\ref{refer})
and (\ref{refer1}) above for
some $k$ with $1\leq k\leq L$.
If the remainder error bound $\tau_i$, $1\leq i\neq k\leq L$, satisfies
\begin{equation}\label{tttttt2222}
|\Delta r_i|=|\tilde{r}_i-r_i|\leq \tau_i\leq\frac{\mbox{gcd}\left(M_k, M_i\right)}{2}-\min\limits_{1\leq j\neq k\leq L} \frac{\mbox{gcd}\left(M_k, M_j\right)}{4},
\end{equation}
then, we have $\hat{n}_i=n_i$ for all $1\leq i \leq L$, i.e.,
the folding numbers $n_i$ for $1\leq i \leq L$ can be accurately
determined.
 \end{corollary}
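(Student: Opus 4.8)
The plan is to reduce the statement to Theorem \ref{them1}, but with the reference modulus $M_1$ replaced by $M_k$. As already observed in the proof of Corollary \ref{111}, the whole derivation leading to Theorem \ref{them1} --- forming (\ref{fangchengzu2}) by subtracting one reference equation of (\ref{fangchengzu}) from the remaining ones, and then running Steps 1--5 --- is valid for \emph{any} choice of reference index, not just $i=1$. Hence the re-indexed version of Theorem \ref{them1} gives $\hat n_i = n_i$ for all $1\le i\le L$ whenever $0\le N<\mbox{lcm}(M_1,\dots,M_L)$ (which is assumed) and
$$-\mbox{gcd}(M_k,M_i)/2\le \Delta r_i-\Delta r_k < \mbox{gcd}(M_k,M_i)/2,\qquad \mbox{for all } 1\le i\neq k\le L.$$
So the task is just to check this re-indexed form of condition (\ref{condition1}) from the hypotheses (\ref{refer1}) and (\ref{tttttt2222}).

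First I would apply the triangle inequality: $|\Delta r_i-\Delta r_k|\le |\Delta r_i|+|\Delta r_k|\le \tau_i+\tau_k$. Then I would substitute the bound (\ref{tttttt2222}) on $\tau_i$ and the bound (\ref{refer1}) on $\tau_k$ and add them; the two copies of $\min_{1\le j\neq k\le L}\mbox{gcd}(M_k,M_j)/4$ cancel, leaving
$$\tau_i+\tau_k < \frac{\mbox{gcd}(M_k,M_i)}{2},$$
so that $|\Delta r_i-\Delta r_k| < \mbox{gcd}(M_k,M_i)/2$ for every $i\neq k$. This is in fact strictly stronger than the re-indexed (\ref{condition1}) (strict on both ends), so the re-indexed Theorem \ref{them1} applies and yields $\hat n_i=n_i$ for all $1\le i\le L$, completing the proof.

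I would also add a one-line remark that the hypothesis is non-vacuous: since $\mbox{gcd}(M_k,M_i)\ge \min_{1\le j\neq k\le L}\mbox{gcd}(M_k,M_j)$, the right-hand side of (\ref{tttttt2222}) is at least $\min_{1\le j\neq k\le L}\mbox{gcd}(M_k,M_j)/4>0$, so each $\tau_i$ is allowed a strictly positive budget.

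There is essentially no hard step here: once one notes that Theorem \ref{them1} is symmetric in the choice of reference modulus --- a fact already recorded in the proof of Corollary \ref{111} --- the remainder is a single triangle-inequality estimate in which the reference error budget $\tau_k$ and the complementary slack in $\tau_i$ are designed precisely to sum to $\mbox{gcd}(M_k,M_i)/2$. The only point requiring a little care is to make sure the inequalities stay strict where (\ref{condition1}) needs them; using the strict bound in (\ref{refer1}) takes care of that. The interpretive content of the corollary --- that by spending a small error budget on the ``best connected'' modulus $M_k$ one can tolerate a larger, $i$-dependent error on the other remainders --- is what makes it worth stating, but it does not complicate the proof.
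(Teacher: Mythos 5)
Your proof is correct and follows essentially the same route as the paper's: both reduce to the re-indexed version of Theorem \ref{them1} with $M_k$ as the reference modulus and verify the condition via the triangle inequality $|\Delta r_i-\Delta r_k|\leq \tau_i+\tau_k<\mbox{gcd}(M_k,M_i)/2$, using the strictness of (\ref{refer1}) to make the sum strict. Your explicit remarks on the re-indexing and on non-vacuousness of the hypothesis are slightly more careful than the paper's terse argument, but the substance is identical.
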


\begin{proof}
If the reference modulus and its corresponding remainder error bound are $M_k$ and $\tau_k$, $1\leq k\leq L$, we just need to prove $|\Delta r_i-\Delta r_k|<\mbox{gcd}(M_i, M_k)/2$ for all $1\leq i\neq k\leq L$, which implies the sufficient condition (\ref{condition1}) in Theorem \ref{them1}.

Since $|\Delta r_i|\leq \tau_i\leq\frac{\mbox{gcd}\left(M_k, M_i\right)}{2}-\min\limits_{1\leq j\neq k\leq L} \frac{\mbox{gcd}\left(M_k, M_j\right)}{4}$, we have
\begin{equation}
|\Delta r_i-\Delta r_k|\leq |\Delta r_i|+|\Delta r_k|\leq \tau_i+\tau_k<\frac{\mbox{gcd}\left(M_k, M_i\right)}{2}.
\end{equation}
Thus, Corollary  \ref{coroll2} is proved.
\end{proof}

Next, we consider the example mentioned before again.
\begin{example}
Let $M_1=5\cdot 14=70$, $M_2=5\cdot 15=75$, $M_3=5\cdot 16=80$, and $M_4=5\cdot 18=90$. It is easy to see that their gcd is $M=5$, and $\{M_i\}_{i=1,2,3,4}$ do not satisfy the constraint of Propositions \ref{wjwtheorem} and
\ref{wjwcorollary} in \cite{wjwang, xwli2}. Thus, their results can
 not be applied here. However,
 from the result of Corollary \ref{111}, we can obtain the maximal robustness bound
$\tau$ for all remainders as  $10/4$, which  is even  larger than $5/4$, a quarter of the gcd of all the moduli.
 From the result of Corollary \ref{coroll2}, we choose $M_4$ as the reference modulus that does satisfy
 (\ref{refer}), and we can get the robustness bound for each remainder as follows: $\tau_1\leq10/4$, $\tau_2\leq20/4$, $\tau_3\leq 10/4$ and $\tau_4< 10/4$.
One can see that,
if we treat remainder error bounds individually
as above, the remainder error bounds for some of the individual remainders,
such as the second remainder in  this example,
may be larger than that in (\ref{tttttt}) in Corollary \ref{111}
 for all the remainder error levels.
In addition, the robust reconstruction range of $N$ is also $0\leq N< $ lcm$(M_1, M_2, M_3, M_4)$.
\end{example}

It is clear that  when  moduli $\Gamma_{i1}$ for $2\leq i\leq L$ are co-prime similar to the case of \cite{wjwang},
from the system of congruences (\ref{gcrt}) in Step 4 in the above algorithm,
 a closed-form single stage CRT can be obtained as \cite{wjwang}
and we can replace Step 4 with the following
Step $4^{\star}$.
\begin{itemize}
  \item \textbf{Step $4^{\star}$:} Calculate $\hat{n}_1$:
\end{itemize}
\begin{equation}\label{close}
\hat{n}_1=\sum_{i=2}^{L}\hat{\xi}_{i1}b_{i1}\frac{\gamma}{\Gamma_{i1}} \mbox{ mod } \gamma,
\end{equation}
where $b_{i1}$ is the modular multiplicative inverse of $\gamma/\Gamma_{i1}$ modulo $\Gamma_{i1}$, which can be calculated in advance, and $\gamma=\Gamma_{21}\Gamma_{31}\cdots\Gamma_{L1}$. After that, from (\ref{folding}) we can get the formulas for other $\hat{n}_i$ for $2\leq i\leq L$.

Next, let us consider the result in \cite{binyang}.
If we consider the following
special case of moduli in Corollary \ref{coroll2},  we can obtain a better result of the remainder error bounds than that in \cite{binyang}.
Let a set of moduli be $$\{M_1, M_2, \cdots, M_{L_1+L_2}\}=\{M\Gamma_{1,1}, M\Gamma_{1,2}, \cdots, M\Gamma_{1,L_1}, MM^{'}\Gamma_{2,1},  MM^{'}\Gamma_{2,2}, \cdots, MM^{'}\Gamma_{2,L_2}\},$$
where $L_1\geq2$. Then, we have the following corollary.

\begin{corollary}\label{cor3}
Assume that all the $\Gamma_{j, i}$, $1\leq i\leq L_j$, $j=1, 2$, are pair-wisely co-prime, let $\Gamma_{1,1}=\dot{\Gamma}_{1,1}M^{'}$, where $\dot{\Gamma}_{1,1}$ is an integer, and  $0\leq N<\mbox{lcm}(M_1, M_2, \cdots, M_{L_1+L_2})$.
Denote $\tau_i$ as the error bound for each remainder $r_i$ for $1\leq i\leq L_1+L_2$. If
\begin{equation}
\begin{array}{cc}
 |\Delta r_1|\leq \tau_1< M/4,\\
 |\Delta r_i|\leq \tau_i\leq M/4, \:\:\:\:\mbox{for all} \:\:\:\:2\leq i \leq L_1, \\
 |\Delta r_i|\leq \tau_i\leq MM^{'}/2-M/4, \:\:\:\:\mbox{for all} \:\:\:\:L_1+1\leq i \leq L_1+L_2,
\end{array}
\end{equation}
then with a closed-form algorithm we have $\hat{n}_i=n_i$ for $1\leq i\leq L_1+L_2$, i.e., the folding numbers $n_i$ for $1\leq i\leq L_1+L_2$ can be accurately determined.
\end{corollary}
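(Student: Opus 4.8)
The plan is to obtain Corollary \ref{cor3} as a direct specialization of Corollary \ref{coroll2}, taking $M_1=M\Gamma_{1,1}$ as the reference modulus. The first task will be to evaluate $\mbox{gcd}(M_1,M_i)$ for every $i\neq 1$ using the pairwise co-primality of the $\Gamma_{j,i}$ together with the divisibility $M'\mid\Gamma_{1,1}$. For $2\leq i\leq L_1$ one has $\mbox{gcd}(M\Gamma_{1,1},M\Gamma_{1,i})=M\,\mbox{gcd}(\Gamma_{1,1},\Gamma_{1,i})=M$. For $L_1+1\leq i\leq L_1+L_2$, writing $M_i=MM'\Gamma_{2,i-L_1}$ and $\Gamma_{1,1}=\dot{\Gamma}_{1,1}M'$, I would compute $\mbox{gcd}(M_1,M_i)=M\,\mbox{gcd}(\dot{\Gamma}_{1,1}M',M'\Gamma_{2,i-L_1})=MM'\,\mbox{gcd}(\dot{\Gamma}_{1,1},\Gamma_{2,i-L_1})=MM'$, where the last equality holds because $\dot{\Gamma}_{1,1}$ divides $\Gamma_{1,1}$, which is co-prime to $\Gamma_{2,i-L_1}$. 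Consequently $\min_{j\neq 1}\mbox{gcd}(M_1,M_j)/4=M/4$, where the minimum over the group-$1$ indices is attained thanks to the hypothesis $L_1\geq 2$.

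Next I would check that $M_1$ is an admissible reference, i.e. that it satisfies \eqref{refer}. The same kind of gcd bookkeeping shows that $\min_{j\neq l}\mbox{gcd}(M_l,M_j)=M$ for \emph{every} modulus $M_l$ in the list: a group-$1$ modulus $M\Gamma_{1,l}$ has, after removing the common factor $M$, a cofactor $\Gamma_{1,l}$ co-prime to $M'$ and to every other $\Gamma$, so its gcd with any other modulus is exactly $M$; a group-$2$ modulus $MM'\Gamma_{2,l'}$ has gcd $MM'$ with $M_1$ and with the other group-$2$ moduli, but gcd $M$ with each of the remaining group-$1$ moduli (which exist since $L_1\geq 2$). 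Hence $\max_{1\leq i\leq L_1+L_2}\min_{1\leq j\neq i\leq L_1+L_2}\mbox{gcd}(M_i,M_j)/4=M/4$, and this maximum is attained at $i=1$, so \eqref{refer} holds with $M_k=M_1$.

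It then remains to match the hypotheses against \eqref{refer1} and \eqref{tttttt2222}. The assumption $|\Delta r_1|\leq\tau_1<M/4=\min_{j\neq 1}\mbox{gcd}(M_1,M_j)/4$ is exactly \eqref{refer1}. For $2\leq i\leq L_1$, \eqref{tttttt2222} becomes $\tau_i\leq \mbox{gcd}(M_1,M_i)/2-M/4=M/2-M/4=M/4$; for $L_1+1\leq i\leq L_1+L_2$ it becomes $\tau_i\leq \mbox{gcd}(M_1,M_i)/2-M/4=MM'/2-M/4$. These coincide with the stated bounds, so Corollary \ref{coroll2} gives $\hat{n}_i=n_i$ for all $1\leq i\leq L_1+L_2$. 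Finally, since $\Gamma_{i1}=M_i/\mbox{gcd}(M_1,M_i)$ equals $\Gamma_{1,i}$ for $2\leq i\leq L_1$ and $\Gamma_{2,i-L_1}$ for $L_1+1\leq i\leq L_1+L_2$, the moduli $\Gamma_{i1}$ for $2\leq i\leq L_1+L_2$ are pairwise co-prime; hence Step~$4^{\star}$ applies and $\hat{n}_1$, and thereby all the $\hat{n}_i$, are given by the closed-form expressions \eqref{close} and \eqref{folding}.

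I expect the only genuinely delicate point to be the gcd computation for the cross-group pairs: producing the factor $M'$ exactly once (rather than zero or two times) in $\mbox{gcd}(M_1,M_i)$ for $i$ in the second group relies essentially on combining $M'\mid\Gamma_{1,1}$ with the pairwise co-primality of all the $\Gamma_{j,i}$. Once that identification is in hand, the rest is a routine verification that the error-bound hypotheses are precisely the instances of \eqref{refer1} and \eqref{tttttt2222} for this choice of reference, plus the observation that the resulting $\Gamma_{i1}$ are co-prime so that the closed-form Step~$4^{\star}$ is available.
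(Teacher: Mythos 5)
Your proposal is correct and follows essentially the same route as the paper: take $M_1$ as the reference modulus, verify it attains the max-min gcd so that (\ref{refer}) holds, read off the stated error bounds as the instances of (\ref{refer1}) and (\ref{tttttt2222}), and observe that the resulting $\Gamma_{i1}$ are pairwise co-prime so Step~$4^{\star}$ gives the closed form. The only difference is that you spell out the gcd computations (in particular that $\mbox{gcd}(M_1,M_i)=MM'$ for the second group, using $M'\mid\Gamma_{1,1}$) which the paper asserts without detail.
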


\begin{proof}
Since $$\min\limits_{1< j\leq L_1+L_2} \frac{\mbox{gcd}\left(M_1, M_j\right)}{4}=\max\limits_{1\leq i\leq L_1+L_2}\min\limits_{1\leq j\neq i\leq L_1+L_2} \frac{\mbox{gcd}\left(M_i, M_j\right)}{4},$$ we can set $M_1$ as the reference modulus and
the error bound $\tau_1<\min\limits_{1< j\leq L_1+L_2} \frac{\mbox{gcd}\left(M_1, M_j\right)}{4}=\frac{M}{4}$ in (\ref{refer1}).

Then, from (\ref{tttttt2222}) when $2\leq i \leq L_1$, $\tau_i\leq M/2-M/4=M/4$; and when $L_1+1\leq i \leq L_1+L_2$, $\tau_i<MM^{'}/2-M/4$. So we can accurately determine the folding numbers $n_i$ for $1\leq i\leq L_1+L_2$.
Next, we can get $\Gamma_{i1}=\frac{M_i}{\mbox{gcd}(M_1, M_i)}=\Gamma_{1,i}$ for $2\leq i\leq L_1$ and $\Gamma_{i1}=\Gamma_{2,i}$ for $L_1+1\leq i \leq L_1+L_2$, all of which are co-prime. Thus, we can obtain a simple
closed-form reconstruction formula for $\hat{n}_1$ similar to  (\ref{close})
and then  $\hat{n}_i$ by (\ref{folding}) for $2\leq i\leq L_1+L_2$.
\end{proof}

\begin{example}
In the above, let $M=10, M^{'}=3, \Gamma_{1,1}=33, \Gamma_{1,2}=31, \Gamma_{2,1}=35, \Gamma_{2,2}=37$, and the moduli are $\{M_1, M_2, M_3, M_4\}=\{10\cdot33, 10\cdot31, 30\cdot35, 30\cdot37\}$. From Corollary \ref{cor3}, we can get $\tau_1<10/4$, $\tau_2\leq 10/4$, $\tau_3\leq 50/4$, $\tau_4\leq 50/4$.
In addition, it has  a closed-form algorithm to robustly reconstruct an unknown integer $N$ for $0\leq N<\mbox{lcm}(M_1, M_2, M_3, M_4)$. However, according to the result of \cite{binyang}, the remainder error bounds would be $\tau_1<10/4$, $\tau_2< 10/4$, $\tau_3< 30/4$, $\tau_4< 30/4$.
\end{example}

Interestingly, the robustness bound result in this corollary
is even better than
that obtained in \cite{binyang}
using a two-stage robust CRT.
What the result here tells us that for the set of moduli in Corollary
\ref{cor3}, which is the set considered in \cite{binyang},
it is not necessary to use a two-stage robust CRT as what is done in
\cite{binyang}.
Another remark we make here is
that the notation $\tau_i$ above denotes
the $i$th remainder error bound. Later, without causing any
notational  confusion, $\tau_j$ will denote the remainder error bound
for the remainders in the j-th group.

\section{Multi-Stage Robust CRT}\label{sec3}

From the study  in the previous section, one
can see that the robustness bound is kind of dependent
on the gcd of the moduli. The larger the gcd is, the better the robustness bound is. However, the large gcd reduces the lcm of the moduli, i.e., reduces
the determinable range of the unknown integer $N$.
When a set of moduli are given, the maximal determinable range
is given too, which is their lcm.
Then, the question is for a given set of moduli, can we improve the robustness bound obtained in Corollary \ref{111}?
Note that in the single stage robust CRT obtained in the previous section,
all the remainders
and their related system of congruence equations are considered and solved
 together simultaneously.
A natural question is: can we split the set of moduli into several
groups so that the moduli in each group have a large gcd and remainders
and their corresponding system of congruence equations in each group
are considered and solved independently using the single stage robust CRT
obtained in the previous section? If so, can we obtain a better robustness
bound than that in Corollary \ref{111} for the single stage robust CRT?
To answer these questions, let us first see an example.



Suppose that a set of moduli $\{180, 220, 486, 513\}$ are given and
 the gcd of these $4$ moduli is $1$.
These four moduli can be
split to two groups $\{180, 220\}$ and $\{486, 513\}$. The gcd of the two moduli in the first group is $M=20$ with
$\Gamma_1=180/20=9$, $\Gamma_2=220/20=11$,  and the gcd of the
two moduli in the second group is $M=27$ with $\Gamma_1=486/27=18$, $\Gamma_2=513/27=19$.
One can see that each group satisfies the condition in
Propositions \ref{wjwtheorem} and \ref{wjwcorollary} and
therefore the closed-form
robust CRT in \cite{wjwang} or the single stage robust CRT
in the previous section can be applied for the robust reconstruction
of an unknown integer
$N$ with robustness bound $\tau < 20/4$ or $\tau <  27/4$,
if $N<$ lcm$\{180, 220\}=1980\stackrel{\Delta}{=}q_1=18\cdot 110$ or $N<$ lcm$\{486, 513\}=9234\stackrel{\Delta}{=}q_2=18\cdot 513$, respectively.

Using the first group with moduli $M_1=180$ and $M_2=220$ and two remainders
$r_1$ and $r_2$, if the integer $N$ is in the range of $[0, \mbox{lcm}(M_1, M_2))$,
then $N$ can be uniquely determined by its two error free remainders $r_1$ and $r_2$ as $N_1$ with $0\leq N_1<q_1$;
otherwise
\begin{equation}\label{g1}
N= N_1 \mbox{ mod }q_1.
\end{equation}
Using two erroneous remainders $\tilde{r}_1$ and $\tilde{r}_2$ with error level $\tau$, and
the closed-form
robust CRT in \cite{wjwang} or the single stage robust CRT
in the previous section
 for the first group, we can obtain an integer $\hat{N}_1$ 
and
\begin{equation}\label{g11}
\mbox{if }\tau < \mbox{ gcd}(M_1, M_2)/4=20/4,
\mbox{ then } |N_1-\hat{N}_1|\leq \tau.
\end{equation}

Similarly, using the second group with moduli $M_3=486$ and $M_4=513$ and two remainders
$r_3$ and $r_4$, if the integer $N$ is in the range of $[0, \mbox{lcm}(M_3, M_4))$,
then $N$ can be uniquely determined by its  two error free remainders $r_3$ and $r_4$ as $N_2$ with $0\leq N_2<q_2$;
otherwise
\begin{equation}\label{g2}
N= N_2 \mbox{ mod }q_2.
\end{equation}
Using two erroneous remainders $\tilde{r}_3$ and $\tilde{r}_4$ with error level $\tau$, and the
 closed-form
robust CRT in \cite{wjwang} or the single stage robust CRT
in the previous section
for the second group, we can obtain an integer $\hat{N}_2$ 
and
\begin{equation}\label{g21}
\mbox{if }\tau < \mbox{ gcd}(M_3, M_4)/4=27/4,
\mbox{ then } |N_2-\hat{N}_2|\leq \tau.
\end{equation}

On the other hand, if integer $N$ is in the range of
$[0, \mbox{ lcm}(M_1, M_2, M_3, M_4))=[0, \mbox{ lcm}(q_1, q_2))$, it can be uniquely determined by its four error free remainders $r_1, r_2, r_3, r_4$. This can be done either from the four remainders directly or from the two new
remainders $N_1$ and $N_2$ of $N$ with two new moduli $q_1$ and $q_2$
with equations (\ref{g1}) and (\ref{g2}), respectively.
For the robustness, as we mentioned earlier,
the closed-form robust CRT and the results in Propositions
\ref{wjwtheorem} and \ref{wjwcorollary}
can not be applied to the four moduli and  the four erroneous remainders
directly since they do not satisfy the co-prime
condition. In addition, according to our single stage robust CRT in Theorem \ref{them1} and Corollary \ref{111} obtained
in the previous section,
its robustness bound would be $\tau< 9/4$ (interestingly,
for the $4$ moduli, their gcd is only $1$). However, using the above
grouping idea, the reconstruction of $N$
 can be done in two stages: the first stage is to reconstruct
$\hat{N}_1$ in (\ref{g11}) and $\hat{N}_2$ in (\ref{g21}) from the two groups,
respectively; the second stage is to reconstruct $\hat{N}$ from its two possibly
erroneous remainders $\hat{N}_1$ and $\hat{N}_2$
with two new moduli $q_1$ and $q_2$.
From the second stage, using the known robust CRT again, we obtain
\begin{equation}\label{gg}
\mbox{if }\tau<\mbox{ gcd}(q_1, q_2)/4=18/4,\mbox{ then, }
|N-\hat{N}|\leq \tau.
\end{equation}
Thus, we have a robust reconstruction too. In order to keep
 all inequalities (\ref{g11}), (\ref{g21}) and (\ref{gg}), one can see that with this two-stage
approach,
the robustness bound on the remainder error level $\tau$ is $18/4$
which is surprisingly even better
than $9/4$ that is the robustness
bound in Corollary \ref{111} using  the single stage
robust CRT for general moduli obtained in the previous section.
This means that using two or more groups for a set of moduli
may have a  better robustness
bound than that using a single group for the whole set of moduli.
Clearly, for the better robustness, the way to group
the moduli or remainders plays a very important
role as one can see from the bounds above.
Note that the robustness bound $\tau<9/4$ in Corollary \ref{111}
for the single stage robust CRT
for the moduli set $\{180, 220, 486, 513\}$
is  only half of the robustness bound $\tau<18/4$
for the same moduli set $\{180, 220, 486, 513\}$ but
with the grouping and the two-stage
approach.
We next present our results for general cases.
First, we consider the case of two groups and two stages.


Let $\{M_{1,1},M_{1,2},\cdots, M_{1,L_1},M_{2,1},M_{2,2},\cdots,M_{2,L_2}\}$
be the whole set of moduli that may not be necessarily
all distinct. It is split to two groups with
Group 1 of $L_1$ moduli: $0<M_{1,1}< M_{1,2}<\cdots<M_{1,L_1}$;
and Gruop 2 of $L_2$ moduli:  $0<M_{2,1}<M_{2,2}<\cdots<M_{2,L_2}$.
These two groups do not have to be disjoint.
Let $N$ be a positive integer, and $r_{1,1}, r_{1,2}, \cdots, r_{1,L_1}, r_{2,1}, r_{2,2}, \cdots, r_{2,L_2}$ be the corresponding remainders of $N$, i.e.,
 \begin{equation}\label{34}
 N=n_{j,i}M_{j,i}+r_{j,i},
 \end{equation}
 where $0\leq r_{j,i}<M_{j,i}$ and $n_{j,i}$ is an unknown integer for $1\leq i\leq L_j$, $j=1\mbox{ or }2$.
 As we know, $N$ can be uniquely reconstructed from its $L_1+L_2$ remainders if and only if $0\leq N<\mbox{lcm}(\delta_1, \delta_2)$, where $\delta_1 \stackrel{\Delta}{=}\mbox{lcm}(M_{1,1}, M_{1,2}, \cdots, M_{1,L_1})$ and $\delta_2
\stackrel{\Delta}{=}\mbox{lcm}(M_{2,1}, M_{2,2}, \cdots, M_{2,L_2})$.
The congruence system (\ref{34}) can be converted into the following two-stage
congruences.


For $j=1, 2$, and Group $j$, we can write
\begin{equation}\label{1}
\left\{\begin{array}{ll}
N_j=K_{j,i}M_{j,i}+r_{j,i}\\
0\leq N_j<\delta_j\\
1\leq i \leq L_j.
\end{array}\right.
\end{equation}
Then, the above $N_1$ and $N_2$
can be combined to form a
 new system  of congruences:
\begin{equation}\label{3}
\left\{\begin{array}{ll}
N=l_1\delta_1+N_1\\
N=l_2\delta_2+N_2\\
0\leq N < \mbox{lcm}\left(\delta_1, \delta_2\right).
\end{array}\right.
\end{equation}
When $\delta_1=\delta_2$, the two congruence equations are degenerated
to a single equation and without loss of generality, we assume $\delta_1\neq \delta_2$ here
and such a similar degenerated case is not considered either  in what follows
in this paper.
Replacing $N_1$ and $N_2$ in (\ref{3}) by equations (\ref{1}), we have
\begin{equation}\label{4}
\left\{\begin{array}{ll}
N=l_1\delta_1+K_{1,i}M_{1,i}+r_{1,i},\,\,1\leq i\leq L_1\\
N=l_2\delta_2+K_{2,i}M_{2,i}+r_{2,i},\,\,1\leq i\leq L_2\\
0\leq N <\mbox{lcm}\left(\delta_1, \delta_2\right).
\end{array}\right.
\end{equation}
It is not hard to see that
$$
N=l_j\delta_j+K_{j,i}M_{j,i}+r_{j,i}=(l_j\frac{\delta_j}{M_{j,i}}+K_{j,i})M_{j,i}+r_{j,i},
$$
where
$$K_{j,i}<\frac{\delta_j}{M_{j,i}}
$$
for $1\leq i \leq L_j$, $j=1\mbox{ or }2$. Clearly,
$$
n_{j,i}=l_j\frac{\delta_j}{M_{j,i}}+K_{j,i},\,\,\,
1\leq i \leq L_j, \,\,\, j=1, 2.
$$

From the known  values of all the moduli $\{M_{j,i}\}$ and all the erroneous remainders $\{\tilde{r}_{j,i}\}$, if we can accurately determine $K_{j,i}$ and $l_j$, then we can accurately determine $n_{j,i}$.  Thus, we propose the following algorithm to robustly reconstruct $N$, called two-stage robust CRT,
 when the remainders are erroneous.

\begin{itemize}
  \item \textbf{Step 1:} Following the single stage robust CRT algorithm
of Steps 1-5 in Section \ref{sec2},
calculate $\hat{K}_{j,i}$ for $1\leq i\leq L_j$ in the system of congruence
equations (\ref{1}) from erroneous
remainders $\{\tilde{r}_i\}$
for each $j=1,2$.
  \item \textbf{Step 2:} After obtaining $\hat{K}_{j,i}$ for $1\leq i\leq L_j$, $j=1, 2$, calculate the average estimate $\hat{N}_j$ of $N_j$ for $j=1,2$
by equations (\ref{estimatevalue}) and (\ref{averagen}):
      \begin{equation}\label{39}
\hat{N}_j=\left[\frac{1}{L_j}\sum_{i=1}^{L_j}(\hat{K}_{j,i}M_{j,i}+\tilde{r}_{j,i})\right],
\end{equation}
where $\left[\cdot\right]$ stands for the rounding integer (\ref{round integer}).

  \item \textbf{Step 3:} Treating $\hat{N}_1$ and $\hat{N}_2$ as the new erroneous remainders in the system of congruence equations (\ref{3}) and following
the single stage robust CRT algorithm Steps 1-5 in Section
\ref{sec2} again, we  calculate $\hat{l}_1$ and $\hat{l}_2$.
  \item \textbf{Step 4:} Calculate $\hat{n}_{j,i}$ for $1 \leq i\leq L_j$
and $j=1,2$:
  \begin{equation}\label{determin folding}
  \hat{n}_{j,i}=\hat{l}_j\frac{\delta_j}{M_{j,i}}+\hat{K}_{j,i}.
  \end{equation}
  \item \textbf{Step 5:} Calculate the average estimate $\hat{N}$ of the
unknown integer $N$:
  \begin{equation}\label{determin aver}
  \hat{N}=\left[\frac{1}{L_1+L_2}\large\left(\sum_{i=1}^{L_1}\left(\hat{n}_{1,i}M_{1,i}+\tilde{r}_{1,i}\right)+\sum_{i=1}^{L_2}\left(\hat{n}_{2,i}M_{2,i}+\tilde{r}_{2,i}\right)\large\right)\right],
  \end{equation}
  where $\left[\cdot\right]$ stands for the rounding integer (\ref{round integer}).
\end{itemize}
Then, we have the following result. For $j=1,2$,
let $\tau_j$ denote the error level of the remainders $r_{j,i}$ in the j-th group, i.e.,
$$|\Delta r_{j,i}|=|\tilde{r}_{j,i}-r_{j,i}|\leq \tau_j,$$
 for $1\leq i \leq L_j$ and
 $$G_j=\max\limits_{1\leq i\leq L_j}\min\limits_{1\leq q\neq i\leq L_j}\frac{\mbox{gcd}\left(M_{j,i}, M_{j,q}\right)}{4}.
$$
Let
$$G=\frac{\mbox{gcd}\left(\delta_1, \delta_2\right)}{4}.
$$
In the above, when the j-th group has only one modulus $M_{j,1}$, then $G_j=\frac{M_{j,1}}{4}$ and the corresponding lcm, $\delta_j$, is just $M_{j,1}$.

\begin{theorem}\label{splitthem}
If
\begin{equation}\label{bbb0}
|\triangle r_{j,i}|\leq \tau_j< \min(G_j, G), \mbox{ for all } 1\leq i\leq L_j \mbox{ and } j=1,2,
\end{equation}
%
 then,  we can accurately determine the folding numbers $\hat{n}_{j,i}=n_{j,i}$ for $1\leq i \leq L_j$, $j=1, 2$, and
 the average estimate $\hat{N}$ of
the unknown integer $N$ in (\ref{determin aver}) satisfies
\begin{equation}
|\hat{N}-N|\leq \left[\frac{L_1\tau_1+L_2\tau_2}{L_1+L_2}\right],
\end{equation}
where $\left[\cdot\right]$ stands for the rounding integer (\ref{round integer}).
\end{theorem}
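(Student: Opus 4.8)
The plan is to run the two stages of the algorithm literally and to apply Corollary~\ref{111} three times: once inside Group~$1$, once inside Group~$2$ (the first stage), and once to the pair of ``super-moduli'' $\delta_1,\delta_2$ (the second stage). The structural facts needed are already assembled in the paragraphs preceding the theorem: each $N_j$ in (\ref{1}) satisfies $0\le N_j<\delta_j=\mbox{lcm}(M_{j,1},\dots,M_{j,L_j})$ with $K_{j,i}<\delta_j/M_{j,i}$, we have $0\le N<\mbox{lcm}(\delta_1,\delta_2)$ in (\ref{3}), and $n_{j,i}=l_j\frac{\delta_j}{M_{j,i}}+K_{j,i}$. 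Hence it suffices to establish, in order, $\hat K_{j,i}=K_{j,i}$, then $|\hat N_j-N_j|\le\tau_j$, then $\hat l_j=l_j$, then $\hat n_{j,i}=n_{j,i}$, and finally the averaging bound on $|\hat N-N|$.

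\emph{First stage.} Fix $j\in\{1,2\}$. The moduli $M_{j,1}<\cdots<M_{j,L_j}$ of Group~$j$ are distinct positive integers and $0\le N_j<\delta_j$, so Corollary~\ref{111} applies to the system (\ref{1}); its robustness bound for this moduli set is precisely $G_j$, and since $\tau_j<\min(G_j,G)\le G_j$ we obtain $\hat K_{j,i}=K_{j,i}$ for all $1\le i\le L_j$ (this is trivial when $L_j=1$, where $K_{j,1}=0$). Substituting $\hat K_{j,i}=K_{j,i}$ and $\tilde r_{j,i}=r_{j,i}+\Delta r_{j,i}$ into (\ref{39}) and using $K_{j,i}M_{j,i}+r_{j,i}=N_j$, exactly as in the passage (\ref{estimatevalue})--(\ref{bound of fre}), gives $\hat N_j=N_j+\big[\frac1{L_j}\sum_{i=1}^{L_j}\Delta r_{j,i}\big]$; because the $\Delta r_{j,i}$ are integers with $|\Delta r_{j,i}|\le\tau_j$, their average has absolute value at most $\tau_j$ and the rounding in (\ref{39}) cannot inflate it, so $|\hat N_j-N_j|\le\tau_j$.

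\emph{Second stage.} Now regard $\delta_1,\delta_2$ as the two moduli and $\hat N_1,\hat N_2$ as the erroneous remainders of $N$ in (\ref{3}). Their error level is $\max(|\hat N_1-N_1|,|\hat N_2-N_2|)\le\max(\tau_1,\tau_2)<G=\mbox{gcd}(\delta_1,\delta_2)/4$, which is exactly the robustness bound of Corollary~\ref{111} for the two-modulus set $\{\delta_1,\delta_2\}$ --- this is where the second half of the hypothesis, $\tau_j<G$, is used. Since $0\le N<\mbox{lcm}(\delta_1,\delta_2)$, Corollary~\ref{111} applied to (\ref{3}) gives $\hat l_1=l_1$ and $\hat l_2=l_2$. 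Combining with the first stage and $n_{j,i}=l_j\frac{\delta_j}{M_{j,i}}+K_{j,i}$, formula (\ref{determin folding}) yields $\hat n_{j,i}=\hat l_j\frac{\delta_j}{M_{j,i}}+\hat K_{j,i}=l_j\frac{\delta_j}{M_{j,i}}+K_{j,i}=n_{j,i}$ for all $i,j$, which is the first claim. For the second claim, substitute $\hat n_{j,i}=n_{j,i}$ and $\hat n_{j,i}M_{j,i}+\tilde r_{j,i}=N+\Delta r_{j,i}$ into (\ref{determin aver}); the $N$-terms contribute $\frac{(L_1+L_2)N}{L_1+L_2}=N$, an integer that factors out of $[\cdot]$, leaving $\hat N-N=\big[\frac1{L_1+L_2}\big(\sum_{i=1}^{L_1}\Delta r_{1,i}+\sum_{i=1}^{L_2}\Delta r_{2,i}\big)\big]$. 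Since $|\sum_{i=1}^{L_1}\Delta r_{1,i}+\sum_{i=1}^{L_2}\Delta r_{2,i}|\le L_1\tau_1+L_2\tau_2$ and $[\cdot]$ is monotone, $|\hat N-N|\le\big[\frac{L_1\tau_1+L_2\tau_2}{L_1+L_2}\big]$.

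\emph{Main obstacle.} Once the two applications of Corollary~\ref{111} are lined up the rest is bookkeeping, and the one genuinely delicate point is the error propagation between the stages: one must be sure the first-stage reconstruction error $|\hat N_j-N_j|$ is controlled by $\tau_j$ and not by something inflated by the rounding in (\ref{39}), so that $\tau_j<G$ really does license the second-stage use of Corollary~\ref{111}; the integrality of the $\Delta r_{j,i}$ is exactly what makes this work. A secondary care point is merely to invoke the structural facts $0\le N_j<\delta_j$, $K_{j,i}<\delta_j/M_{j,i}$, $0\le N<\mbox{lcm}(\delta_1,\delta_2)$, and $n_{j,i}=l_j\delta_j/M_{j,i}+K_{j,i}$ at the right moments.
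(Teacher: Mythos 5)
Your proposal is correct and follows essentially the same route as the paper's proof: apply Corollary \ref{111} within each group to recover the $K_{j,i}$, verify that the averaged group estimates satisfy $|\hat N_j-N_j|\le\tau_j$, apply Corollary \ref{111} again to the two-modulus system $\{\delta_1,\delta_2\}$ to recover the $l_j$, and then combine via $n_{j,i}=l_j\delta_j/M_{j,i}+K_{j,i}$ and the averaging bound. Your explicit appeal to the integrality of the $\Delta r_{j,i}$ to show the rounding in (\ref{39}) cannot inflate the error is a slightly more careful justification of a step the paper asserts without comment, but it is the same argument.
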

\begin{proof}
For $j=1,2$,
according to Corollary \ref{111}, when $|\Delta r_{j,i}|\leq \tau_j< G_j$, we can accurately determine $K_{j,i}$ in the systems of congruence equations (\ref{1}):
\begin{equation}
K_{j,i}=\hat{K}_{j,i},\mbox{ for }1\leq i \leq L_j.
\end{equation}
Furthermore, for  the average estimates  $\hat{N}_1$ and $\hat{N}_2$
in (\ref{39}) in Step 2 above, we have
$$|\Delta N_j|=|N_j-\hat{N}_j|\leq \tau_j,
$$
which keeps the same error level as the
remainders $r_{j,i}$ for $1\leq i\leq L_j$, $j=1, 2$.

In the second stage (\ref{3}), $\hat{N}_1$ and $\hat{N}_2$ become the erroneous remainders. To accurately determine $l_1$ and $l_2$,
according to Proposition \ref{wjwcorollary} or Corollary \ref{111},
the error levels should satisfy $\tau_1<G=\frac{\mbox{gcd}\left(\delta_1, \delta_2\right)}{4}$ and $\tau_2<G=\frac{\mbox{gcd}\left(\delta_1, \delta_2\right)}{4}$, then
\begin{equation}
l_{j}=\hat{l}_{j}, \mbox{ for } j=1, 2.
\end{equation}
Thus, combining with
the first stage,
we have the condition $\tau_1<\min(G_1, G)$ and $\tau_2<\min(G_2, G)$
so that $K_{j,i}=\hat{K}_{j,i}$ and $l_j=\hat{l}_j$ for $1\leq i\leq L_j$,
$j=1, 2$. Namely, we have $n_{j,i}=\hat{n}_{j,i}$ from (\ref{determin folding}).

After we accurately determine the folding numbers  $n_{j,i}=\hat{n}_{j,i}$ for $1\leq i\leq L_j$, $j=1, 2$, we can get the average estimate $\hat{N}$ in (\ref{determin aver}) of the unknown integer $N$, i.e.,
\begin{equation}
\begin{array}{lll}
\hat{N}&=&\displaystyle\left[\frac{1}{L_1+L_2}\large\left(\sum_{i=1}^{L_1}\left(\hat{n}_{1,i}M_{1,i}+\tilde{r}_{1,i}\right)+\sum_{i=1}^{L_2}\left(\hat{n}_{2,i}M_{2,i}+\tilde{r}_{2,i}\right)\large\right)\right]\\
&=&\displaystyle N+\left[\frac{1}{L_1+L_2}\large\left(\sum_{i=1}^{L_1}\left(\Delta r_{1,i}\right)+\sum_{i=1}^{L_2}\left(\Delta r_{2,i}\right)\large\right)\right].
\end{array}
\end{equation}
From  $|\Delta r_{j,i}|\leq \tau_j$ for $1\leq i\leq L_j$
for $j=1,2$,
we can easily obtain
\begin{equation}
|\hat{N}-N|\leq \left[\frac{L_1\tau_1+L_2\tau_2}{L_1+L_2}\right].
\end{equation}
Therefore the proof is completed.
\end{proof}

The above results for two groups of moduli can be easily generalized
to a general number of groups of moduli by using
Corollary \ref{111} twice for the two stages of the
congruence equations as follows.

Assume there are $s$ groups of moduli with $s>2$. For every $1\leq j\leq s$,
the elements in the j-th group are denoted as $0<M_{j,1}< M_{j,2}<\cdots<M_{j,L_j}$, let $\delta_j\stackrel{\Delta}{=}\mbox{lcm}\left(M_{j,1}, M_{j,2}, \cdots, M_{j,L_j}\right)$ and $\tau_j$ denote the error level of the remainders $r_{j,i}$,
  $1\leq i\leq L_j$,
from the moduli in the j-th  group,
and  when the j-th group has more than one element, define
\begin{equation}\label{gj}
G_j=\max\limits_{1\leq i\leq L_j}\min\limits_{1\leq q\neq i\leq L_j}\frac{\mbox{gcd}\left(M_{j,i}, M_{j,q}\right)}{4}.
\end{equation}
If the j-th group has only one element, $M_{j,1}$, define
$G_j=\frac{M_{j,1}}{4}$.
Let
\begin{equation}\label{gg1}
G=\max\limits_{1\leq i\leq s}\min\limits_{1\leq q\neq i\leq s}\frac{\mbox{gcd}\left(\delta_i, \delta_q\right)}{4}.
\end{equation}
Then, we have the following result.
\begin{theorem}\label{theeeeee}
If
\begin{equation}\label{bbbbbbb}
|\triangle r_{j,i}|\leq \tau_j< \min(G_j, G), \mbox{ for all } 1\leq i\leq L_j \mbox{ and } 1\leq j\leq s,
\end{equation}
then, we can accurately determine the folding numbers
 $\hat{n}_{j,i}=n_{j,i}$ for $1\leq i \leq L_j$ and  $1\leq j\leq s$, thus  we can robustly reconstruct $\hat{N}$ as an estimate of
 $N$ when $0\leq N<\mbox{lcm}\left(\delta_1, \delta_2, \cdots, \delta_s\right)$:
\begin{equation}\label{generalN}
  \hat{N}=\left[\frac{1}{\sum_{j=1}^{s}L_j} \sum_{j=1}^{s}
 \sum_{i=1}^{L_j} \left( \hat{n}_{j,i}M_{j,i}+\tilde{r}_{j,i}\right)
\right],
\end{equation}
and
\begin{equation}\label{generalE}
|\hat{N}-N|\leq \left[\frac{\sum_{j=1}^{s}L_j\tau_j}{\sum_{j=1}^{s}L_j}\right].
\end{equation}
\end{theorem}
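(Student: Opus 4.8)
The plan is to run the proof of Theorem \ref{splitthem} almost verbatim, but invoking the single-stage results once per group and then once more across the $s$ groups. First, for each fixed $j$ with $1\le j\le s$ I would treat the congruence system (\ref{1}) as an ordinary single-stage problem with moduli $M_{j,1},\dots,M_{j,L_j}$ and uniform error level $\tau_j$; since $\tau_j<G_j$, Corollary \ref{111} yields $\hat K_{j,i}=K_{j,i}$ for all $1\le i\le L_j$, and then (\ref{39}) together with the averaging estimate (\ref{bound of fre}) gives $|\hat N_j-N_j|\le\tau_j$. So after the first stage each $\hat N_j$ is an erroneous remainder of $N$ modulo $\delta_j$ whose error has magnitude at most $\tau_j$.

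For the second stage I would apply the single-stage machinery to the $s$-congruence version of (\ref{3}), namely $N=l_j\delta_j+N_j$ for $1\le j\le s$, with new moduli $\delta_j$ (assumed pairwise distinct, as in the two-group discussion) and new erroneous remainders $\hat N_j$. The key choice is the reference group: pick the index $k$ attaining the outer maximum in (\ref{gg1}), so that $\min_{q\ne k}\mbox{gcd}(\delta_k,\delta_q)/4=G$. Then for every $j\ne k$ we have $\tau_k<G\le\mbox{gcd}(\delta_k,\delta_j)/4$ and $\tau_j<G\le\mbox{gcd}(\delta_k,\delta_j)/4$, hence
\[
|\Delta N_j-\Delta N_k|\le\tau_j+\tau_k<\frac{\mbox{gcd}(\delta_k,\delta_j)}{2},
\]
which is exactly condition (\ref{condition1}) of Theorem \ref{them1} for the second-stage system with $\delta_k$ in the role of $M_1$. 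Moreover $0\le N<\mbox{lcm}(\delta_1,\dots,\delta_s)$ and $l_k\delta_k\le N$ force $l_k<\mbox{lcm}(\delta_1,\dots,\delta_s)/\delta_k$, so Theorem \ref{them1} applies and delivers $\hat l_j=l_j$ for all $1\le j\le s$.

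Putting the two stages together, (\ref{determin folding}) gives $\hat n_{j,i}=\hat l_j\delta_j/M_{j,i}+\hat K_{j,i}=l_j\delta_j/M_{j,i}+K_{j,i}=n_{j,i}$ for all $i,j$. Substituting these exact folding numbers into (\ref{generalN}) and using $N=n_{j,i}M_{j,i}+r_{j,i}$ gives $\hat N=\big[N+\frac{1}{\sum_j L_j}\sum_{j,i}\Delta r_{j,i}\big]=N+\big[\frac{1}{\sum_j L_j}\sum_{j,i}\Delta r_{j,i}\big]$, and the bound $|\Delta r_{j,i}|\le\tau_j$ then yields (\ref{generalE}) by the same rounding argument used in Theorem \ref{splitthem}.

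I expect the only genuinely new point, compared with the $s=2$ case, to be the reference-group selection in the second stage: for $s>2$ the quantity $\mbox{gcd}(\delta_i,\delta_q)$ is no longer a single symmetric number, so one must choose $k$ from the outer maximum in (\ref{gg1}) and then verify simultaneously, for all $j\ne k$, that $\tau_j+\tau_k<\mbox{gcd}(\delta_k,\delta_j)/2$ --- which is precisely why the hypothesis (\ref{bbbbbbb}) imposes the single common bound $G$ rather than a per-pair bound. Everything else --- the intra-group reconstruction, the propagation of the error level through the averaging in (\ref{39}), and the final rounding estimate --- is identical to the proof of Theorem \ref{splitthem}, merely repeated $s$ times instead of twice, so I do not anticipate any further obstacle.
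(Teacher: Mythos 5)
Your proposal is correct and follows essentially the same route as the paper: apply Corollary \ref{111} within each group, treat the resulting $\hat N_j$ as erroneous remainders modulo $\delta_j$, apply the single-stage result again across groups, and finish with the rounding/averaging bound exactly as in Theorem \ref{splitthem}. Your explicit reference-group selection and the verification $\tau_j+\tau_k<\mbox{gcd}(\delta_k,\delta_j)/2$ for all $j\neq k$ is just an inlining of the proof of Corollary \ref{111} (which the paper simply cites at the second stage), so there is no substantive difference.
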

\begin{proof}
The proof is similar to the proof of Theorem \ref{splitthem}.
In the first stage, via (\ref{bbbbbbb}) we can accurately determine the folding numbers $K_{j,i}$ for $1\leq i\leq L_j$, $1\leq j\leq s$, and obtain the robust estimates $\hat{N}_j$ of $N_j$ for the j-th group with the error bound $|\hat{N}_j-N_j|\leq \tau_j< \min(G_j, G)$, where $0\leq N_j<\delta_j$ for all $1\leq j\leq s$.

Then, in the second stage we take these estimates $\hat{N}_j$ as erroneous remainders and $\delta_j$ as moduli for $1\leq j\leq s$ to form a new congruence system. Applying the result of Corollary \ref{111} again, we can accurately determine the unknown folding numbers $l_j$ for $1\leq j\leq s$. By that, we can accurately determine $n_{j,i}$ for $1\leq i\leq L_j$ with $1\leq j\leq s$.

Lastly, once we get the accurate values of $n_{j,i}$, the average estimate $\hat{N}$ of $N$ can be found. The error bound of $N$ is proved
similar to  Theorem \ref{splitthem}. Hence, the theorem is proved.
\end{proof}

Similar to Corollary \ref{coroll2} for the single
stage robust CRT,
 in the second stage with moduli $\delta_j$ and erroneous remainders $\hat{N}_j$ for  $1\leq j\leq s$, we can also individually consider
the remainder error level for each remainder
$\hat{N}_j$
 with respect to modulus $\delta_j$
and have the following result.
\begin{corollary}\label{coroll3}
Assume that the reference modulus is $\delta_k$ for some $k$ with
$1\leq k\leq s$, which satisfies
$$G=\min\limits_{1\leq q\neq k\leq s} \frac{\mbox{gcd}\left(\delta_k, \delta_q\right)}{4}=
\max\limits_{1\leq i\leq s}\min\limits_{1\leq q\neq i\leq s} \frac{\mbox{gcd}\left(\delta_i, \delta_q\right)}{4}$$
 and its corresponding remainder error bound $\tau_k<G$. If
\begin{equation}
|\triangle r_{k,i}|\leq \tau_k<\min(G_k, G), \mbox{ for all } 1\leq i\leq L_k,
\end{equation}
\begin{equation}
|\triangle r_{j,i}|\leq \tau_j<\min(G_j, \frac{\mbox{gcd}(\delta_j, \delta_k)}{2}-\min(G_k, G)), \mbox{ for all } 1\leq i\leq L_j \mbox{ and } 1\leq j\neq k\leq s,
\end{equation}
then we can accurately determine the folding numbers $\hat{n}_{j,i}=n_{j,i}$ for $1\leq i\leq L_j$ and $1\leq j\leq s$.
\end{corollary}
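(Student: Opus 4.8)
The plan is to run the two-stage decoding exactly as in the proof of Theorem~\ref{theeeeee}, but to replace the uniform second-stage bound $\tau_j<G$ by the individualized bounds supplied by Corollary~\ref{coroll2}.

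\emph{First stage.} For each $j$ with $1\leq j\leq s$ the hypotheses force $\tau_j<G_j$ (directly $\tau_j<\min(G_j,\cdot)\leq G_j$, and for $j=k$ one has $\tau_k<\min(G_k,G)\leq G_k$). Hence Corollary~\ref{111} applies to the $j$-th single-stage system, the one analogous to~(\ref{1}), and gives $\hat{K}_{j,i}=K_{j,i}$ for all $1\leq i\leq L_j$. As in~(\ref{estimatevalue})--(\ref{averagen}) the averaged estimate $\hat{N}_j$ of $N_j$ then satisfies $|\Delta N_j|\triangleq|\hat{N}_j-N_j|\leq\tau_j$, so $\hat{N}_j$ is a remainder of $N$ modulo $\delta_j$ with error at most $\tau_j$.

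\emph{Second stage.} Now regard $\delta_1,\ldots,\delta_s$ as moduli and $\hat{N}_1,\ldots,\hat{N}_s$ as erroneous remainders in the congruence system analogous to~(\ref{3}), with remainder errors $\Delta N_j$ bounded by $\tau_j$. Since $\delta_k$ attains $\max_i\min_{q\neq i}\mbox{gcd}(\delta_i,\delta_q)/4=G$, I take $\delta_k$ as the reference modulus; note $\tau_k<\min(G_k,G)\leq G$, which is the reference-remainder requirement~(\ref{refer1}) transported to the second stage. The key step is the chain, valid for every $j\neq k$,
\[
|\Delta N_j-\Delta N_k|\leq\tau_j+\tau_k<\Big(\tfrac{\mbox{gcd}(\delta_j,\delta_k)}{2}-\min(G_k,G)\Big)+\min(G_k,G)=\tfrac{\mbox{gcd}(\delta_j,\delta_k)}{2},
\]
using $\tau_j<\tfrac{\mbox{gcd}(\delta_j,\delta_k)}{2}-\min(G_k,G)$ and $\tau_k<\min(G_k,G)$. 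This is precisely condition~(\ref{condition1}) of Theorem~\ref{them1} for the second-stage system with $\delta_k$ in the role of the reference modulus (recall from the proof of Theorem~\ref{them1} that any equation may be chosen as the reference). Since $0\leq N<\mbox{lcm}(\delta_1,\ldots,\delta_s)$, Theorem~\ref{them1} yields $\hat{l}_j=l_j$ for all $1\leq j\leq s$. Combining the two stages, $n_{j,i}=l_j\frac{\delta_j}{M_{j,i}}+K_{j,i}=\hat{l}_j\frac{\delta_j}{M_{j,i}}+\hat{K}_{j,i}=\hat{n}_{j,i}$ for all $1\leq i\leq L_j$, $1\leq j\leq s$, which is the assertion.

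The only non-routine point — and thus the step I would be most careful about — is the displayed inequality of the second stage: one must keep track of where $\min(G_k,G)$ enters (as the bound on the \emph{reference} remainder error $\tau_k$) versus where $\tfrac{\mbox{gcd}(\delta_j,\delta_k)}{2}-\min(G_k,G)$ enters (as the bound on the \emph{non-reference} errors $\tau_j$), so that the two contributions sum to exactly $\mbox{gcd}(\delta_j,\delta_k)/2$ and the strict inequality is preserved. One must also note that using $\delta_k$ rather than $\delta_1$ as the reference in Theorem~\ref{them1} is legitimate, and that $\tau_k<\min(G_k,G)\leq G$ is exactly what is needed to run the second stage with this choice. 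Everything else is a verbatim transcription of the proofs of Theorem~\ref{theeeeee} and Corollary~\ref{coroll2}, including the degenerate-case and divisibility conventions already adopted in the paper.
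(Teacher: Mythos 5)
Your proof is correct and follows exactly the route the paper intends (the paper states this corollary without a written proof, saying only that it is obtained ``similar to Corollary~\ref{coroll2}''): first stage via Corollary~\ref{111} using $\tau_j<G_j$, second stage via Theorem~\ref{them1} with $\delta_k$ as reference and the split $\tau_k<\min(G_k,G)$, $\tau_j<\mbox{gcd}(\delta_j,\delta_k)/2-\min(G_k,G)$ summing to the required $\mbox{gcd}(\delta_j,\delta_k)/2$. The details you supply, including the legitimacy of choosing $\delta_k$ rather than $\delta_1$ as reference, are exactly the ones the paper leaves implicit.
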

\begin{example}
Given three groups of moduli: $\{26\cdot5, 26\cdot6\}$, $\{27\cdot7, 27\cdot13\}$, and $\{28\cdot15, 28\cdot11\}$. We can get $G_1=26/4$, $G_2=27/4$, $G_3=28/4$ and $G=21/4$. So, from Theorem \ref{theeeeee}, we obtain the robustness bounds: $\tau_1<\min(G_1, G)=21/4$, $\tau_2<\min(G_2, G)=21/4$ and $\tau_3<\min(G_3, G)=21/4$. If we use the result of Corollary \ref{coroll3}, we can get a better error bound for some groups  as follows: $\tau_1\leq26/4$, $\tau_2\leq21/4$ and $\tau_3<21/4$.
\end{example}

For a given set of $L$ moduli $M_i$,
$1\leq i\leq L$, there are
many different grouping methods of the remainders, and therefore,
many ways to robustly reconstruct the unknown integer
from its erroneous remainders.
Let us see an example.
\begin{example}
Consider the moduli set $\{M_1, M_2, M_3, M_4\}=\{48\cdot 4, 48\cdot 3, 56\cdot 3, 56\cdot 2\}$. First, consider all the moduli as a single group and
we obtain the robustness bound  $24/4$ according to Corollary
\ref{111}. Second, we group the four moduli into two groups $\{48\cdot4, 48\cdot3\}$ and $\{56\cdot3, 56\cdot2\}$ with $G_1=48/4$, $G_2=56/4$ and $G=\frac{\mbox{gcd}(\delta_1, \delta_2)}{4}=48/4$ in Theorem \ref{splitthem}. Accordingly, the robustness bound in this case is $\tau_1< 48/4$ and $\tau_2< 48/4$.
 Lastly, if we group the four moduli into another two groups $\{48\cdot4, 56\cdot2\}$ and $\{48\cdot3, 56\cdot3\}$ with $G_1=16/4$, $G_2=24/4$ and $G=336/4$, then, the robustness bound in this case is $\tau_1<16/4$ and $\tau_2<24/4$.
\end{example}

From this example, we can see that different grouping methods lead
to different robustness bounds. Compared with the robustness bound by using a single stage robust CRT for the whole set of moduli,
sometimes a  grouping can enlarge the robustness bound while
 sometimes a grouping may decrease the robustness bound. Thus,
 another question is  whether there exists a proper grouping method
 to ensure the robustness bound  larger than that in
Corollary \ref{111} using the single stage
robust CRT.
We next  present  a result that tells us when  there exists
a grouping method with a better robustness bound for remainders in some
groups
using a two-stage robust CRT than that using the single stage robust CRT.

\begin{corollary}\label{jielun}
For a given set of $L$ moduli $\{M_i$, $i=1, \cdots, L\}$,
the robustness bound can not be enlarged
for  remainders in any group by using a two-stage robust CRT with a
 grouping method of the moduli
 if and only if it is the case of \cite{wjwang},
i.e.,
the remaining factors $\Gamma_i$ of the moduli $M_i=M\Gamma_i$
divided by their
gcd $M$, $1\leq i\leq L$, are co-prime.
\end{corollary}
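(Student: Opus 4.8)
The plan is to prove the two directions separately. For the "if" direction, suppose the moduli are of the special form $M_i = M\Gamma_i$ with $\Gamma_1,\dots,\Gamma_L$ pairwise co-prime. The single-stage robust CRT (Corollary \ref{111}) then gives robustness bound $\max_i\min_{j\neq i}\gcd(M_i,M_j)/4 = M/4$, since $\gcd(M_i,M_j) = M\gcd(\Gamma_i,\Gamma_j) = M$ for every pair $i\neq j$. Now take any grouping of the moduli into groups indexed by $j$ with group lcms $\delta_j$. I would show that for each group, $M$ divides $\delta_j$ and $M = \gcd(\delta_j,\delta_{j'})$ for distinct groups (the latter because the $\Gamma$'s contributed to $\delta_j$ and to $\delta_{j'}$ are co-prime, so $\delta_j/M$ and $\delta_{j'}/M$ share no factor). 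Hence $G = \gcd(\delta_j,\delta_{j'})/4 = M/4$ in the two-group case, and more generally $G$ defined by (\ref{gg1}) equals $M/4$. Also $G_j = \max_{i\in \text{group }j}\min_{q\neq i}\gcd(M_{j,i},M_{j,q})/4 = M/4$ when the group has $\geq 2$ elements (each intra-group gcd is exactly $M$), and $G_j = M_{j,1}/4 \geq M/4$ when the group is a singleton. Therefore in Theorem \ref{theeeeee} the bound for the $j$-th group is $\tau_j < \min(G_j,G) = M/4$, which is never larger than the single-stage bound $M/4$. Thus no grouping enlarges the robustness bound for remainders in any group.

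For the "only if" direction — which I expect to be the main obstacle — I would argue contrapositively: if the $\Gamma_i$ are \emph{not} all co-prime, I must exhibit a grouping and a group whose robustness bound strictly exceeds the single-stage bound $B \triangleq \max_i\min_{j\neq i}\gcd(M_i,M_j)/4$. The idea is that if $\Gamma_{i_0}$ and $\Gamma_{i_1}$ share a common factor $d>1$, then $\gcd(M_{i_0},M_{i_1}) = Md > M$; more to the point, one wants to find a group $\{M_{i_0},M_{i_1},\dots\}$ whose elements pairwise share a gcd strictly larger than $B\cdot 4 = \max_i\min_{j\neq i}\gcd(M_i,M_j)$, and such that $G$ for the resulting grouping is not the binding constraint. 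The natural candidate: let the index $k$ achieve the single-stage max, so $B = \min_{j\neq k}\gcd(M_k,M_j)/4$; since the $\Gamma$'s are not all co-prime there must be some pair with gcd $> 4B$. I would then group that pair (and possibly expand it greedily to a maximal clique of moduli all of whose pairwise gcds exceed $4B$) into one group, and put all remaining moduli into other groups (e.g., one singleton group each, or a second group) chosen so that $\gcd(\delta_j,\delta_{j'})/4 \geq 4B/4 > B$ for the relevant groups — the point being that the group lcms inherit the large common factor. Then $\min(G_j,G)$ for that distinguished group exceeds $B$, giving the strict improvement.

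The delicate part is ensuring the second-stage quantity $G$ does not silently drop back down to $B$ or below: one must verify that the group lcms $\delta_j$ can be arranged so that $\gcd(\delta_j,\delta_{j'})$ stays large for at least the group we care about, and that the singleton/leftover groups do not have a $G_j$ or cross-gcd that is the new bottleneck for that group. I would handle this by a careful case analysis on where the "large common factor" $d>1$ sits: if two moduli $M_a,M_b$ share a factor $>M$ beyond their gcd structure, placing exactly these two in their own group gives $G_{\text{that group}} = \gcd(M_a,M_b)/4 > B$ and $\delta = \mathrm{lcm}(M_a,M_b)$ still divisible by a factor large enough that the cross-group $G$ term is $\geq$ the single-stage-excluded value; a short argument comparing $\gcd(\mathrm{lcm}(M_a,M_b),\delta') $ against $\gcd(M_a,\delta')$ and $\gcd(M_b,\delta')$ finishes it. I would close by noting this is exactly the dichotomy: co-prime $\Gamma_i$ $\Leftrightarrow$ all pairwise $\gcd(M_i,M_j)$ equal, which is precisely the obstruction to grouping helping.
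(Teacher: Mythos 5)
Your ``if'' direction matches the paper's: when the $\Gamma_i$ are co-prime every pairwise gcd equals $M$, so every multi-element group has $G_j=M/4$ and no grouping can push $\min(G_j,G)$ above $M/4$. Your closing observation that co-prime $\Gamma_i$ is equivalent to all pairwise gcds $\gcd(M_i,M_j)$ being equal is also correct and is essentially what the paper's necessity argument establishes. (Both you and the paper are terse about overlapping and singleton groups in this direction; those are covered by the standing non-redundancy assumption following Corollary~\ref{cor2}.)

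The gap is in the ``only if'' direction, and it sits exactly where you flagged the argument as delicate. Your existence claim is fine: writing $B=\max_i\min_{j\neq i}\gcd(M_i,M_j)/4$, if no pairwise gcd exceeded $4B$ then $4B$ would divide every modulus and all pairwise gcds would equal $4B$, which is the co-prime case; so in the non-co-prime case some pair has $\gcd(M_a,M_b)>4B$. But your construction --- put $\{M_a,M_b\}$ (or a clique) in one group and the \emph{remaining} moduli in other, disjoint groups --- does not in general give $G>B$, and no comparison of $\gcd(\mathrm{lcm}(M_a,M_b),\delta')$ with $\gcd(M_a,\delta')$ and $\gcd(M_b,\delta')$ can rescue it, because all of these can collapse to exactly $4B$. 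Concretely, take $M_1=p^2q$, $M_2=p^2r$, $M_3=ps$ with $p,q,r,s$ distinct primes: then $B=p/4$, the only pair with large gcd is $(M_1,M_2)$ with $\gcd=p^2$, yet every disjoint grouping yields $G=p/4=B$ (for instance $\gcd(\mathrm{lcm}(M_1,M_2),M_3)=p$), so $\min(G_j,G)\le B$ for every group and nothing is enlarged. The paper closes this by exploiting that the groups are explicitly allowed to \emph{overlap}: it places a common ``hub'' modulus (the reference $M_1$ in its Case I, the modulus $M_i$ of the large off-reference pair in its Case II) into both groups, so that $\gcd(\delta_1,\delta_2)\ge M_{\mathrm{hub}}>4B$ automatically --- strictly, because no modulus divides another by the standing assumption --- whence $G>B$ and $\min(G_1,G)>B$ for the distinguished group. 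In the example above the overlapping grouping $\{M_1,M_2\}$ and $\{M_1,M_3\}$ gives $G_1=p^2/4$ and $G=p^2q/4$, enlarging the bound for the first group. Without introducing repeated moduli across groups (or an equivalent device), your argument cannot be completed.
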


\begin{proof}
It is easy to prove the sufficiency as follows.
When the moduli $M_i$ satisfy the constraint in \cite{wjwang},
i.e., $\Gamma_i$ are co-prime,
its robustness bound using the
 single stage robust CRT with a single  group moduli is $M/4$.
On the other hand, from Theorem \ref{theeeeee}, each $G_j$ of any grouping
and $G$ are both $M/4$.
Hence, we cannot enlarge the robustness bound  in this  case.

We next prove the necessity.
Assume that the robustness bound
for remainders in any group
can not be enlarged by the
two-stage robust CRT with a grouping method of the moduli
over the  robustness bound of the single stage robust CRT of
the whole set of the  moduli.
Denote $\mbox{gcd}\left(M_i, M_q\right)=m_{iq}$.
Without loss of generality, we can assume
$$
\frac{m_{1L}}{4}=\frac{\mbox{gcd}\left(M_1, M_L\right)}{4}=\max\limits_{1\leq i\leq L}\min\limits_{1\leq q\neq i\leq L}\frac{\mbox{gcd}\left(M_i, M_q\right)}{4},$$
 and
$$m_{12}\geq m_{13}\geq \cdots \geq m_{1L}.
$$
 Thus,  according to Corollary \ref{111}, its robustness bound using
the single stage robust CRT with a single group
moduli is $m_{1L}/4$. We then have the following two cases.

\textbf{Case I:} There exists one $q$ with $2\leq q<L$ such that
$$m_{12}\geq \cdots \geq m_{1q}>m_{1(q+1)}\geq \cdots \geq m_{1L}.$$
If we group the moduli as
Group 1: $\{M_1, \cdots, M_q\}$;
and Group 2: $\{M_1, M_{q+1}, \cdots, M_L\}$.
With this grouping, we have that  $G_1>m_{1L}/4$ and $G_2\geq m_{1L}/4$,
 $G\geq M_1/4>m_{1L}/4$. Thus, we obtain $\tau_1<\min\{G_1, G\}$, $\tau_2<\min\{G_2, G\}$, where $\min\{G_1, G\}>m_{1L}/4=\tau$,
which contradicts with the assumption
 that we cannot enlarge the robustness bound for the remainders
in Group 1 using
a two-stage robust CRT. This proves that $m_{12}= m_{13}= \cdots = m_{1L}$.

\textbf{Case II:} Under the condition of $m_{12}= m_{13}= \cdots = m_{1L}=M$,
we know that any $m_{iq}=\mbox{gcd}(M_i, M_q)\geq M$, since $M$ is a factor of all the moduli $M_i$. Suppose that there exists one $m_{iq}>M$ with $q\neq i\neq 1$. We can group the moduli as Group 1: $\{M_i, M_q\}$,
and Group 2:  $\{M_i, \{M_{i_1}\}_{i_1\neq q \mbox{ or }i}\}$.
 Similar to Case I, $G_1>M/4$, $G_2\geq M/4$ and $G\geq M_i/4>M/4$. So,
 we can enlarge the robustness bound for the remainders in Group 1 by using the two-stage
robust CRT with this grouping. This also contradicts with the assumption.

From the above two cases we conclude that $m_{iq}=M$ for all $1\leq i\neq q\leq L$, i.e., it is the case of \cite{wjwang}.
\end{proof}

Now, we give an explicit example. Suppose that there are a set of  moduli with the form of $\{M_1K_1, M_2K_1, M_2K_2,\\ M_2K_3\}$, where $M_1, K_1, K_2, K_3$ are co-prime. According to Corollary \ref{111}, the robustness
bound using the single stage robust CRT is $\tau<\min\left(M_2, K_1 \mbox{gcd}\left(M_1, M_2\right)\right)/4$. If the moduli are grouped into  two groups as $\{M_1K_1, M_2K_1\}$ and $\{M_2K_1, M_2K_2, M_2K_3\}$. Then, according to Theorem \ref{splitthem},  its robustness bound is $\tau_1< K_1 \mbox{gcd}\left(M_1, M_2\right)/4$ and $\tau_2<M_2/4$, one of which
is greater than the robustness bound $\min(M_2, K_1 \mbox{gcd}(M_1,$\\$ M_2))/4$ when $M_2\neq K_1 \mbox{gcd}\left(M_1, M_2\right)$.

\begin{example}
Let $M_1=8$, $M_2=14$, $K_1=3$, $K_2=5$, and $K_3=7$. Then we can calculate $\tau_1<6/4$ and $\tau_2<14/4$ from the two-stage
robust CRT. One can see that $\tau_2<14/4$  is significantly
greater than $\tau<6/4$ using the single stage robust CRT.
\end{example}


From Corollary \ref{jielun}, one can see that as long as $\Gamma_i$ in moduli
$M_i$ are not all co-prime, using a two-stage robust CRT with some
 grouping method has a larger robustness bound for remainders in
some groups than  the single stage robust CRT does. In the same way,
we may treat $\{\delta_1, \delta_2, \cdots, \delta_s\}$
as a new set of moduli and group it again so that
the single stage robust CRT is applied three times with the following result.
We call it three-stage robust CRT.

Let us split  $\{\delta_1, \delta_2, \cdots, \delta_s\}$ in Theorem \ref{theeeeee} to $k$ groups. For every $1\leq t\leq k$, the elements in the t-th group are denoted as $0<\delta_{t,1}<\cdots<\delta_{t,y_t}$, let $\xi_t\stackrel{\Delta}{=}\mbox{lcm}(\delta_{t,1}, \cdots, \delta_{t,y_t})$ and define
$$\Upsilon_t=\max\limits_{1\leq i\leq y_t}\min\limits_{1\leq q\neq i\leq y_t}\frac{\mbox{gcd}(\delta_{t,i},\delta_{t,q})}{4}.$$
Let
$$\Upsilon=\max\limits_{1\leq i\leq k}\min\limits_{1\leq q\neq i\leq k}\frac{\mbox{gcd}(\xi_i,\xi_q)}{4}.$$
We then have the following result.
\begin{theorem}
If
\begin{equation}\label{three-stage condition}
|\triangle r_{j,i}|\leq \tau_j< \min(G_j, \min\limits_{t}\{\Upsilon_t: \delta_j\in \{\delta_{t,1}, \cdots, \delta_{t,y_t}\}\} , \Upsilon), \mbox{ for all } 1\leq i\leq L_j, 1\leq j\leq s \mbox{ and }1\leq t\leq k,
\end{equation}
then, we can accurately determine the folding numbers
 $\hat{n}_{j,i}=n_{j,i}$ for $1\leq i \leq L_j$ and  $1\leq j\leq s$, thus  we can robustly reconstruct $\hat{N}$ as an estimate of
 $N$ when $0\leq N<\mbox{lcm}\left(\delta_1, \delta_2, \cdots, \delta_s\right)$:
\begin{equation}\label{generalN1}
  \hat{N}=\left[\frac{1}{\sum_{j=1}^{s}L_j} \sum_{j=1}^{s}
 \sum_{i=1}^{L_j} \left( \hat{n}_{j,i}M_{j,i}+\tilde{r}_{j,i}\right)
\right],
\end{equation}
and
\begin{equation}\label{generalE1}
|\hat{N}-N|\leq \left[\frac{\sum_{j=1}^{s}L_j\tau_j}{\sum_{j=1}^{s}L_j}\right].
\end{equation}
\end{theorem}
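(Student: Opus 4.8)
The plan is to reduce the three-stage statement to two applications of the already-proven two/multi-stage machinery, exactly mirroring how Theorem~\ref{theeeeee} was obtained from Theorem~\ref{splitthem}. The essential observation is that a three-stage robust CRT is nothing but a multi-stage robust CRT (Theorem~\ref{theeeeee}) applied to the first two layers, followed by one more single stage robust CRT applied across the groups $\{\delta_{t,1},\dots,\delta_{t,y_t}\}$; since each ``stage'' is itself just an invocation of Corollary~\ref{111}, the whole argument chains together without any new ideas.

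First I would set up the three layers of congruences explicitly, paralleling equations~(\ref{1})--(\ref{4}). For each innermost group $j$ with $1\le j\le s$, write $N_j = K_{j,i}M_{j,i}+r_{j,i}$, $0\le N_j<\delta_j$, exactly as in (\ref{1}). Then for each middle group $t$ with $1\le t\le k$ collecting those $\delta_j$ that lie in $\{\delta_{t,1},\dots,\delta_{t,y_t}\}$, write the middle-layer system $P_t = \ell_{t,m}\delta_{t,m}+(\text{remainder})$ with $0\le P_t<\xi_t$, where the remainders of $P_t$ modulo $\delta_{t,m}$ are exactly the $N_j$ with $\delta_j=\delta_{t,m}$. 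Finally write the outermost system $N = u_t\xi_t+P_t$, $0\le N<\mathrm{lcm}(\xi_1,\dots,\xi_k)=\mathrm{lcm}(\delta_1,\dots,\delta_s)$. Unwinding these three layers gives, just as in the display after (\ref{4}), that $n_{j,i}$ is an explicit affine combination of $u_t$, $\ell_{t,m}$, and $K_{j,i}$, so it suffices to determine all three families of folding numbers accurately.

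Next I would apply the established results layer by layer. In the first layer, condition $\tau_j<G_j$ from (\ref{three-stage condition}) and Corollary~\ref{111} give $\hat{K}_{j,i}=K_{j,i}$ for all $i$, and the averaging step (\ref{39}) yields robust estimates $\hat{N}_j$ with $|\hat{N}_j-N_j|\le\tau_j$. In the second layer, the new ``remainder errors'' are exactly these $\hat{N}_j-N_j$, bounded by $\tau_j$; the condition $\tau_j<\min_t\{\Upsilon_t:\delta_j\in\{\delta_{t,1},\dots,\delta_{t,y_t}\}\}$ is precisely the hypothesis needed for Corollary~\ref{111} applied within each middle group $t$ (note $\Upsilon_t$ plays the role of $G_j$ there), so $\hat{\ell}_{t,m}=\ell_{t,m}$ and averaging gives robust estimates $\hat{P}_t$ with $|\hat{P}_t-P_t|\le\max_j\{\tau_j:\delta_j\in\text{group }t\}\le\min(\Upsilon,\dots)$. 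In the third layer, these $\hat{P}_t$ become the erroneous remainders with respect to moduli $\xi_t$, and since every relevant $\tau_j<\Upsilon=\max_i\min_{q\neq i}\mathrm{gcd}(\xi_i,\xi_q)/4$, one final application of Corollary~\ref{111} gives $\hat{u}_t=u_t$. Substituting back through the affine relation recovers $\hat{n}_{j,i}=n_{j,i}$, and the error bound on $\hat{N}$ in (\ref{generalE1}) follows verbatim from the averaging computation at the end of the proof of Theorem~\ref{splitthem}.

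The main obstacle — really the only thing requiring care rather than routine bookkeeping — is making sure the three-way minimum in (\ref{three-stage condition}) is matched correctly to the three nested invocations of Corollary~\ref{111}, in particular that each $\delta_j$ is routed to the correct middle group $t$ so that the term $\min_t\{\Upsilon_t:\delta_j\in\{\delta_{t,1},\dots,\delta_{t,y_t}\}\}$ controls exactly the error that propagates into the second-layer CRT for that group, and that the $\hat{N}_j$ inherit error level $\tau_j$ (not something larger) so the middle layer's hypotheses are genuinely met. Once the indexing is pinned down, the proof is a direct transcription of the proof of Theorem~\ref{theeeeee} with one extra layer, so I would keep it short: ``The proof is entirely analogous to that of Theorem~\ref{theeeeee}, applying Corollary~\ref{111} three times instead of twice,'' followed by the explicit layer-by-layer bound chain above.
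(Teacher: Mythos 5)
Your proposal is correct and follows essentially the same route as the paper: set up the three nested layers of congruences, apply Corollary \ref{111} once per layer (with the three terms of the minimum in the hypothesis controlling the three invocations), propagate the error level $\tau_j$ through the averaged estimates $\hat{N}_j$ and $\hat{P}_t$, and conclude with the same averaging bound as in Theorem \ref{splitthem}. No substantive differences from the paper's argument.
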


\begin{proof}
The congruence system
$$N=n_{j,i}M_{j,i}+r_{j,i},$$
where $0\leq r_{j,i}<M_{j,i}$ for $1\leq i\leq L_j$, $1\leq j\leq s$ and $0\leq N<\mbox{lcm}(\delta_1, \delta_2, \cdots, \delta_s)$, can be converted into the following three-stage congruences.

For $1\leq j\leq s$, and Group $j$ in the first stage, we can write
\begin{equation}\label{11}
\left\{\begin{array}{ll}
N_j=K_{j,i}M_{j,i}+r_{j,i}\\
0\leq N_j<\delta_j\\
1\leq i \leq L_j.
\end{array}\right.
\end{equation}
In the second stage,
\begin{equation}\label{22}
\left\{\begin{array}{ll}
P_t=H_{t,i}\delta_{t,i}+N_{t,i}\\
0\leq P_t<\xi_t\\
1\leq i \leq y_t\\
1\leq t \leq k.
\end{array}\right.
\end{equation}
Then, in the third stage, we can write
\begin{equation}\label{33}
\left\{\begin{array}{ll}
N=l_t\xi_t+P_t\\
0\leq N<\mbox{lcm}(\xi_1, \cdots, \xi_k)\\
1\leq t \leq k.
\end{array}\right.
\end{equation}

As long as we can accurately determine all of $K_{j,i}$, $H_{t,i}$ and $l_t$ in each congruence system, we can then accurately determine $n_{j,i}$. According to conditions (\ref{three-stage condition}), we can accurately determine $K_{j,i}$ for $1\leq i \leq L_j$, $1\leq j\leq s$ and get the error bound
$$|\hat{N}_j-N_j|\leq \tau_j<\min(G_j, \min\limits_{t}\{\Upsilon_t: \delta_j\in \{\delta_{t,1}, \cdots, \delta_{t,y_t}\}\}, \Upsilon).$$

Next, in each  group of the second stage we take these estimates $\hat{N}_j$ as erroneous remainders and $\delta_j$ as moduli. Applying the result of Corollary \ref{111}, we can accurately determine $H_{t,i}$, and also get the robust estimate $\hat{P}_t$ satisfying
$$|\hat{P}_t-P_t|<\min(G_j,  \min\limits_{t}\{\Upsilon_t: \delta_j\in \{\delta_{t,1}, \cdots, \delta_{t,y_t}\}\},
\Upsilon).$$

 Similarly, treat the estimates $\hat{P}_t$ as the erroneous remainders and $\xi_t$ as moduli in the third stage. Since $|\hat{P}_t-P_t|<\Upsilon$, from Corollary \ref{111} again, we can accurately determine $l_t$.
Once we accurately determine these unknown folding numbers in each congruence
 system, we can accurately determine $n_{j,i}$
 and then obtain the robust estimate $\hat{N}$ of the unknown integer $N$.
As for the error bound of the estimate $\hat{N}$, the proof is the same to that of Theorem \ref{splitthem}. Therefore, we complete the proof.
\end{proof}

\begin{example}
Consider a given set of moduli $\{96\cdot2, 96\cdot3, 72\cdot3, 72\cdot5, 64\cdot5, 64\cdot7\}$. Treating them as one group and using the single robust CRT, we get its error bound $\tau$ for the remainders satisfying $\tau<32/4$. If we split the moduli to three groups: $\{96\cdot2, 96\cdot3\}$, $\{72\cdot3, 72\cdot5\}$ and $\{64\cdot5, 64\cdot7\}$, we  get $G_1=96/4$, $G_2=72/4$, $G_3=64/4$, $\delta_1=96\cdot2\cdot3$, $\delta_2=72\cdot3\cdot5$, $\delta_3=64\cdot5\cdot7$ and $G=64/4$. By using the two-stage robust CRT, we can get the error bounds $\tau_j$ for the remainders in Group $j$ for $j=1, 2, 3$ satisfying $\tau_1<\min(G_1, G)=64/4$, $\tau_2<\min(G_2, G)=64/4$ and $\tau_3<\min(G_3, G)=64/4$, all of which are larger than the bound $32/4$ in the single robust CRT. If we use the three-stage robust CRT and split $\{\delta_1, \delta_2, \delta_3\}$ to two  groups again: $\{\delta_1, \delta_2\}$ and $\{\delta_3\}$. We can get $\Upsilon_1=72/4$, $\Upsilon=320/4$. So, in this three-stage robust CRT,  the error bounds satisfy $\tau_1<\min(G_1, \Upsilon_1, \Upsilon)=72/4$, $\tau_2<\min(G_2, \Upsilon_1, \Upsilon)=72/4$ and $\tau_3<\min(G_3, \Upsilon)=64/4$. Compared with the two-stage robust CRT, we increase the robustness bounds in Group $1$ and Group $2$ from $64/4$ to $72/4$ by using the three-stage robust CRT.
\end{example}

The above three-stage robust
CRT can be easily generalized to a multi-stage robust CRT
with more than three stages.
Although we can use a multi-stage robust CRT with some grouping methods to obtain a larger robustness bound for remainders in some groups, there are some challenges about how to choose
moduli in a group and how many groups and stages we should split in order to find a better robustness bound such that we can enlarge all the robustness bounds in every group.

Let us first look at the simplest case when there are only three moduli $\{M_1, M_2, M_3\}$. Without loss of generality, we can assume that $\mbox{gcd}\left(M_1, M_2\right)\geq \mbox{gcd}\left(M_1, M_3\right)\geq \mbox{gcd}\left(M_2, M_3\right)$. Regarding the three moduli as one group and by Corollary \ref{111}, the robustness bound is $\mbox{gcd}\left(M_1, M_3\right)/4$. Since $\mbox{gcd}\left(M_3, M_2\right)\leq \mbox{gcd}\left(M_3, M_1\right)$, if we want to obtain a robustness bound
strictly larger than $\mbox{gcd}\left(M_1, M_3\right)/4$, the modulus $M_3$ must independently form an individual group by itself, and in the meantime it does not allow other groups to include $M_3$. Thus, there is only one possible grouping method
 as $\{M_3\}$ and $\{M_1, M_2\}$. The robustness bound
therein is  $\tau_1<\mbox{gcd}\left(M_3, \mbox{lcm}\left(M_1, M_2\right)\right)/4$ and $\tau_2<\min\left(\mbox{gcd}\left(M_1, M_2\right), \mbox{gcd}\left(M_3, \mbox{lcm}\left(M_1, M_2\right)\right)\right)/4$, which may be both larger than $\mbox{gcd}\left(M_1, M_3\right)/4$. Otherwise, we have to group them as
 $\{M_1, M_3\}$ and $\{M_1, M_2\}$ and in this way we may only enlarge one
group's (not all group's) robustness bound as what is used in the proof of Corollary \ref{jielun}.

\begin{example}
When $M_1=560, M_2=480$ and $M_3=210$, we can see that $\mbox{gcd}\left(M_1, M_2\right)=\mbox{gcd}\left(560, 480\right)=80$,  $\mbox{gcd}\left(M_1, M_3\right)=\mbox{gcd}\left(560, 210\right)=70$ and  $\mbox{gcd}\left(M_2, M_3\right)=\mbox{gcd}\left(480, 210\right)=30$. Regarding these three moduli as a single
group, the robustness bound of the single stage robust CRT
 is $\mbox{gcd}\left(M_1, M_3\right)/4=\mbox{gcd}\left(560, 210\right)/4=70/4$. In order to find a larger robustness bound, we just only consider the robustness bound of the case of two groups: $\{M_3\}$ and $\{M_1, M_2\}$. We can get $\tau_1<210/4$ and $\tau_2<80/4$, which are all larger than $70/4$.
\end{example}

The above special case is about only three moduli's grouping. When the number of given moduli is larger, it  becomes more complicated. In the next section, we analyze some special cases.


\section{An Algorithm for Grouping Moduli in Two-Stage Robust CRT}\label{sec4}

From the above study, one may see that for a given set of moduli,
although its determinable range for an integer from its remainders
is fixed, i.e., the lcm of all the moduli,
the robustness bounds for an erroneous remainder
and a reconstructed integer
depend on a reconstruction algorithm from erroneous remainders,
which depends on the grouping of the moduli in a multi-stage
robust CRT. For a general set of moduli, it is not obvious on how to group them in a multi-stage (or even two-stage) robust CRT, in particular when
the number of moduli is not small.
In this section,
 based on Theorem \ref{theeeeee} for the two-stage robust CRT,
 we propose an algorithm for grouping a general set of moduli to possibly
obtain
a larger robustness bound for remainders in every group than that
in Corollary \ref{111} for the single stage robust CRT.

For a given set of moduli ${\cal M}=\{M_1, M_2, \cdots, M_L\}$, $L\geq3$, we
first assume that the set of moduli does not include any pair of $M_{i_1}$ and $M_{i_2}$ satisfying $M_{i_1}=nM_{i_2}$, because Corollary \ref{cor2} has told us that such a redundant modulus $M_{i_2}$ does not help to increase
the  determinable range of $N$, $0\leq N<\mbox{lcm}(M_1, M_2, \cdots, M_L)$ nor the robustness bound
in a single stage robust CRT. From condition (\ref{bbbbbbb})
we need to assure that all $G_j$ in (\ref{gj}) and $G$ in (\ref{gg1}) after a grouping strictly greater than
$\Theta\stackrel{\Delta}{=}\max\limits_{1\leq i\leq L}\min\limits_{1\leq j\neq i\leq L} \frac{\mbox{gcd}\left(M_i, M_j\right)}{4}$ in Corollary \ref{111} for  the single stage robust CRT. Then, we have an algorithm as follows.



\begin{enumerate}
  \item For each $M_i$, $1\leq i\leq L$, find all $M_j$, $1\leq j\neq i\leq L$, satisfying $\frac{\mbox{gcd}(M_{j}, M_i)}{4}>\Theta$. With $M_i$, form the corresponding set ${\cal M}_i$:
 $${\cal M}_i=\{M_i, M_{j}: \frac{\mbox{gcd}(M_{j}, M_i)}{4}>\Theta\}.$$
Thus, with each set ${\cal M}_i$, we have
$$G_{i}\geq \min\limits_{M_{j}\in {\cal M}_i}\frac{\mbox{gcd}(M_{j}, M_i)}{4}>\Theta.$$
If  modulus $M_i$ satisfies  $\frac{\mbox{gcd}(M_{j}, M_i)}{4}\leq\Theta$ for all $M_j$, $1\leq j\neq i\leq L$, then we let ${\cal M}_i=\{M_i\}$.
  \item Among all of the $L$ sets ${\cal M}_i$ for $1\leq i\leq L$, there may be one or more pairs, ${\cal M}_{i_1}$ and ${\cal M}_{i_2}$, satisfying ${\cal M}_{i_1}\subseteq {\cal M}_{i_2}$. In this case,
   we can delete the smaller
  set ${\cal M}_{i_1}$ and only keep the larger set ${\cal M}_{i_2}$.
  \item After Step 2), from the remaining sets of $\{{\cal M}_{i}\}$, we find all such combinations of $\{{\cal M}_{i_1}, {\cal M}_{i_2}, \cdots, {\cal M}_{i_l}\}$ that $\bigcup\limits_{j=1}\limits^{l}{\cal M}_{i_j}$ exactly includes all moduli ${\cal M}$. In other words,
if anyone ${\cal M}_{i_s}$ for $1\leq s\leq l$
is deleted from a combination $\{{\cal M}_{i_1}, {\cal M}_{i_2}, \cdots, {\cal M}_{i_l}\}$, then $\bigcup\limits_{j=1 \& j\neq s}\limits^{l}{\cal M}_{i_j}$ is a proper subset of ${\cal M}$, i.e.,
      \begin{equation}\label{combinat}
{\cal M}\neq     \bigcup\limits_{j=1 \& j\neq s}\limits^{l}{\cal M}_{i_j}\subset {\cal M} \subseteq \bigcup\limits_{j=1}\limits^{l}{\cal M}_{i_j}.
      \end{equation}
  \item As for every combination in the above, treat each ${\cal M}_{i_j}$ as a small group and calculate its lcm as $\delta_{i_j}$, $1\leq j\leq l$. Then, check whether
  \begin{equation}\label{gggg}
  G=\max\limits_{1\leq p\leq l}\min\limits_{1\leq q\neq p\leq l}\frac{\mbox{gcd}\left(\delta_{i_{p}}, \delta_{i_{q}}\right)}{4}>\Theta.
   \end{equation}
   If there is one combination
 $\{{\cal M}_{i_j}\}$ as above
to make inequality (\ref{gggg}) hold, then every $\min(G_{i_j}, G)$, $1\leq j\leq l$, is strictly greater than $\Theta$. According to (\ref{bbbbbbb}) in Theorem \ref{theeeeee}, one can see that this combination is just a grouping as desired and it enlarges a robustness bound for remainders in every group by using the two-stage robust CRT. Otherwise, if for every possible
 combination in Step 3), inequality (\ref{gggg}) does not hold, then it is said that we fail to use this algorithm to enlarge a robustness bound for remainders in every group by using the two-stage robust CRT.
  \end{enumerate}

Let us first consider the above grouping algorithm for the case of \cite{wjwang}, i.e.,
the remaining factors $\Gamma_i$ of the moduli $M_i=M\Gamma_i$
divided by their gcd $M$, $1\leq i\leq L$, are co-prime. First, we find all ${\cal M}_i=\{M_i\}$, $1\leq i\leq L$. Next, there is only one combination
$\{{\cal M}_{1}, {\cal M}_{2}, \cdots, {\cal M}_{L}\}$ satisfying (\ref{combinat}), and we treat each ${\cal M}_i=\{M_i\}$ as one group, then calculate $G=M/4$ in (\ref{gggg}), which equals to $\Theta=M/4$. In conclusion, we fail to find a grouping  to enlarge a robustness bound for remainders in every group by using the two-stage robust CRT,
which can be also confirmed from the earlier result in Corollary \ref{jielun}.
Next, we give a positive example.

\begin{example}
Consider a set of moduli $\{210M, 143M, 77M, 128M, 81M, 125M, 169M\}$, where $M$ is an integer. As one group, using the
single stage robust CRT, its robustness bound is $\Theta=M/4$.
According to the above grouping algorithm, find $7$ sets: ${\cal M}_1=\{210M, 77M, 128M, 81M, 125M\}$, ${\cal M}_2=\{143M, 77M, 169M\}$, ${\cal M}_3=\{77M, 210M, 143M\}$, ${\cal M}_4=\{128M, 210M\}$, ${\cal M}_5=\{81M, 210M\}$, ${\cal M}_6=\{125M, 210M\}$ and ${\cal M}_7=\{169M, 143M\}$. Among them, there are only four combinations satisfying (\ref{combinat}) as follows: $\{{\cal M}_1, {\cal M}_2\}$, $\{{\cal M}_1, {\cal M}_7\}$, $\{{\cal M}_2, {\cal M}_4, {\cal M}_5, {\cal M}_6\}$ and $\{{\cal M}_3, {\cal M}_4, {\cal M}_5, {\cal M}_6, {\cal M}_7\}$. Then, check whether one of the above four combinations satisfies inequality (\ref{gggg}). Fortunately, for the first combination $\{{\cal M}_1, {\cal M}_2\}$, inequality (\ref{gggg}) holds. We can calculate $G_1=2M/4$, $G_2=11M/4$ and $G=7M/4$, all of which are strictly greater than $M/4$. Thus, we have obtained a grouping method of the moduli to enlarge a robustness bound for remainders in every group by using the two-stage robust CRT.
\end{example}

\begin{remark}
As one can see in the proof of Corollary \ref{jielun}
and in the above algorithm and examples, a modulus $M_i$ may be repeatedly used in more than one groups in the two-stage robust CRT. Its aim is to make $G$ and $G_j$ after grouping greater than or equal to the robustness bound by using
the  single stage robust CRT for the whole set of moduli. Recall the case of grouping a set of three moduli $\{M_1, M_2, M_3\}$. Assume $\mbox{gcd}\left(M_1, M_2\right)> \mbox{gcd}\left(M_1, M_3\right)> \mbox{gcd}\left(M_2, M_3\right)$. From Corollary \ref{111}, the robustness bound for using the single robust CRT is $\frac{\mbox{gcd}\left(M_1, M_3\right)}{4}$. According to the above grouping moduli algorithm in two-stage robust CRT, they are split to two groups: $\{M_1, M_2\}$ and $\{M_3\}$. One can see that $G_1=\frac{\mbox{gcd}(M_1, M_2)}{4}$, $G_2=\frac{M_3}{4}$, $\delta_1=\mbox{lcm}(M_1, M_2)$, $\delta_2=M_3$
and $G=\frac{\mbox{gcd}(\delta_1, \delta_2)}{4}$. In this grouping method, the robustness bound for remainders in group $\{M_1, M_2\}$ is $\min(G_1, G)$ and the robustness bound for remainders in group $\{M_3\}$ is $\min(G_2, G)$. As $G_1$ and $G_2$ are greater than $\frac{\mbox{gcd}\left(M_1, M_3\right)}{4}$, a robustness bound for remainders in each group depends on the value of $G$.
When $G=\frac{\mbox{gcd}(\delta_1, \delta_2)}{4}$ is less than $\frac{\mbox{gcd}\left(M_1, M_3\right)}{4}$, a robustness bound for remainders in each group is worse than that for the single robust CRT. Thus, we should repeat modulus $M_1$ in group $\{M_3\}$, and the two groups become $\{M_1, M_2\}$ and $\{M_1, M_3\}$. In this way, we enlarge a robustness bound for group $\{M_1, M_2\}$ and keep the robustness bound for group $\{M_1, M_3\}$ non-changed. On the other hand, when $G=\frac{\mbox{gcd}(\delta_1, \delta_2)}{4}$ is larger than $\frac{\mbox{gcd}\left(M_1, M_3\right)}{4}$, we do not need to repeat modulus $M_1$, since the robustness bound for group $\{M_1, M_2\}$ and the robustness bound for group $\{M_3\}$ are both greater than $\frac{\mbox{gcd}\left(M_1, M_3\right)}{4}$.
This example tells us that, to enlarge the robustness bound,
whether a modulus $M_i$
is repeatedly used or not in multiple groups depends on the grouping
method and the set of moduli. Repeating a modulus, sometimes,  may help
to enlarge the robustness bound but sometimes may not.
\end{remark}

\section{Simulations}\label{sec5}

In this section, we present some simple
simulation results to evaluate the proposed single stage robust CRT algorithm and the two-stage robust CRT algorithm for integers with a general set of moduli. Let us first consider the case when $M_1=9\cdot15$, $M_2=9\cdot20$ and $M_3=9\cdot18$. These three moduli do not satisfy the condition
that $\Gamma_i$, $i=1,2,3$, are co-prime and thus the robust CRT obtained
in \cite{xwli2, wjwang} can not be applied directly.
However, we can use  our proposed single stage robust CRT.
According to Corollary \ref{111}, the maximal range of
the determinable $N$ is $1620$ and the maximal remainder error level $\tau$
for the robustness is upper bounded by $\tau<\frac{27}{4}$ from (\ref{tttttt}).
In this simulation, the unknown integer $N$ is uniformly distributed in the interval $[0, 1620)$. We consider the maximal remainder error levels $\tau=0, 1, 2, 3, 4, 5, 6$, and the errors are also uniformly distributed
on  $[0,\tau]$
in the remainders. $2000000$ trials for each of them are implemented.
The mean error $E(|\hat{N}-N|)$ between the estimated $\hat{N}$ in (\ref{averagen}) and the true $N$ is plotted by the solid line marked with $\square$, and
the theoretical estimation error upper bound in (\ref{bound of fre}) is plotted by the solid line marked with $\triangle$ in Fig. \ref{figone}.
Obviously, one can see that for a general set of moduli the reconstruction errors of $N$ from the erroneous remainders are small compared to the range of $N$.

Next, we compare the robustness between the single stage and the two-stage
robust CRT algorithms for the above same set of moduli.
In this case, the conditions of the maximal remainder error levels for the
single stage and the two-stage robust CRT algorithms of two groups
$\{M_1, M_2\}$ and $\{M_3\}$
are
$\frac{27}{4}$ and $\frac{45}{4}$, i.e., $\tau\leq 6$ and  $11$,
respectively. Let us
consider the maximal remainder error levels $\tau$ from $0$ to $25$, and $2000000$ trials for each of them.
The unknown integer $N$ is taken as before.
Fig. \ref{figtwo} shows the curves of
the error bounds and the mean estimation errors $E(|\hat{N}-N|)$
for both the single stage and the two-stage robust CRT algorithms.
Note that from our single stage robust CRT theory, the valid
error bound for $\tau$ is only upto $6$, which can be seen from the
simulation results that
the  mean estimation  error $E(|\hat{N}-N|)$ starts to deviate the previous line trend at  $\tau=7$, then increases significantly and
breaks the  linear error bound when $\tau$ is further greater, i.e., robust reconstruction may not hold. On the other hand, with the two-stage robust CRT algorithm, one can see that the curve of the mean estimation
 error $E(|\hat{N}-N|)$ is always below the curve of the
error bound, i.e., we can robustly reconstruct $N$, $0\leq N<1620$, even when
the maximal error level is  $11$ that is the upper bound for $\tau$ obtained
in this paper for the two-stage robust CRT algorithm.
These simulation results confirm the theory obtained in this paper.

\section{Conclusion}\label{sec6}

In this paper, we considered the robust reconstruction problem from
erroneous remainders, namely robust CRT problem, for a general set of
moduli that may not satisfy the condition needed in the previous
robust CRT studies in \cite{xwli2, wjwang}.
We obtained a necessary and sufficiency condition for the robust CRT
when all the erroneous remainders are used together, called single stage
robust CRT.
 Interestingly,  our proposed single stage robust CRT may
have  better robustness than that
of the robust CRT obtained in \cite{xwli2, wjwang} even when it
could be applied.
To further improve the robustness, we then proposed a multi-stage
robust CRT, where the moduli are grouped into several groups.
As an example, for the two-stage robust CRT,
our proposed single stage robust CRT is first applied to each group
and then applied across the groups second time.
Also, an algorithm on how to group a given set of moduli was proposed.
We finally presented some  simulations to verify  our proposed
theory.

\begin{figure}
  \includegraphics[width=\textwidth]{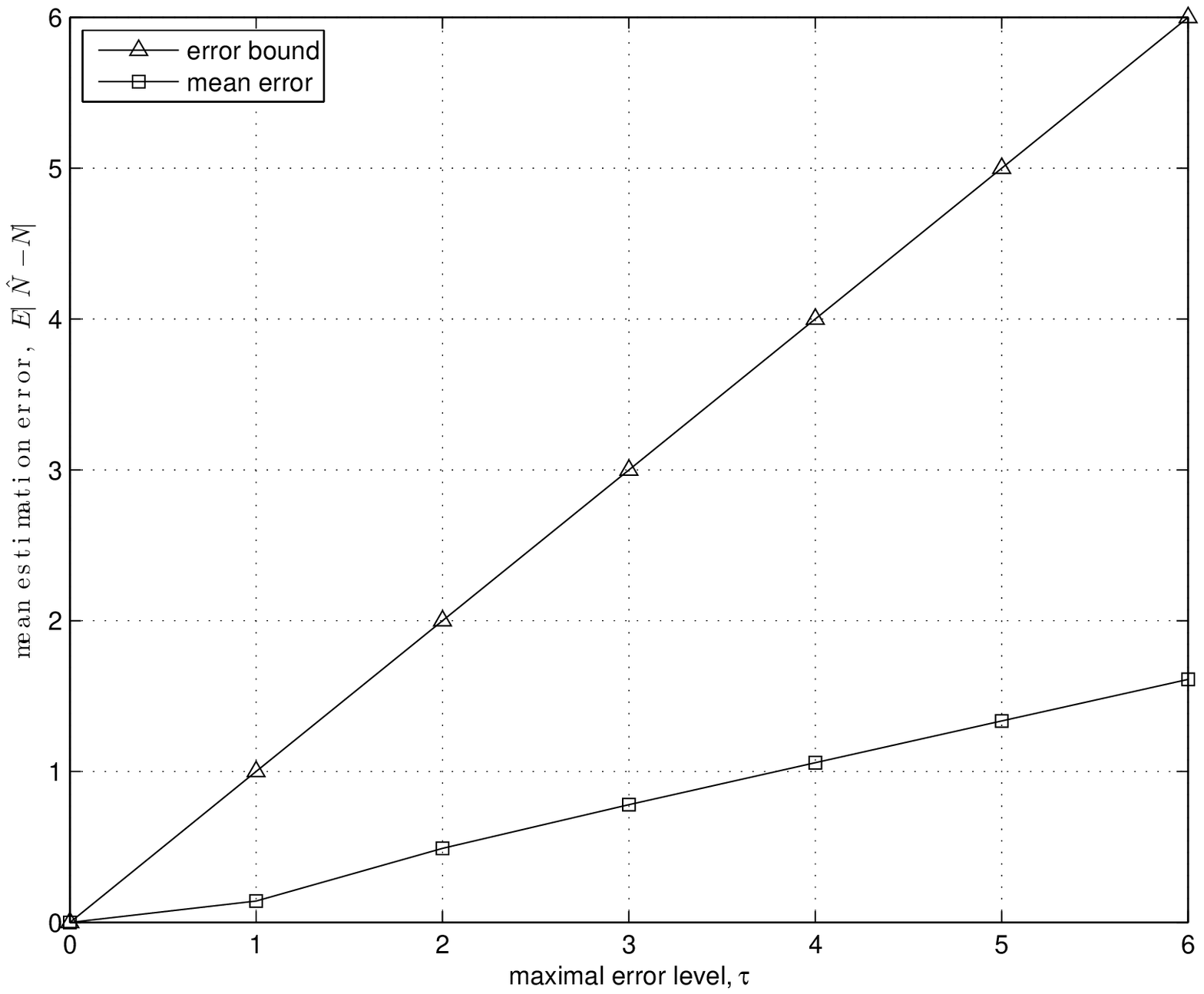}\\
  \caption{Mean estimation error and theoretical error
bound using the single stage robust CRT}\label{figone}
\end{figure}

\begin{figure}
  \includegraphics[width=\textwidth]{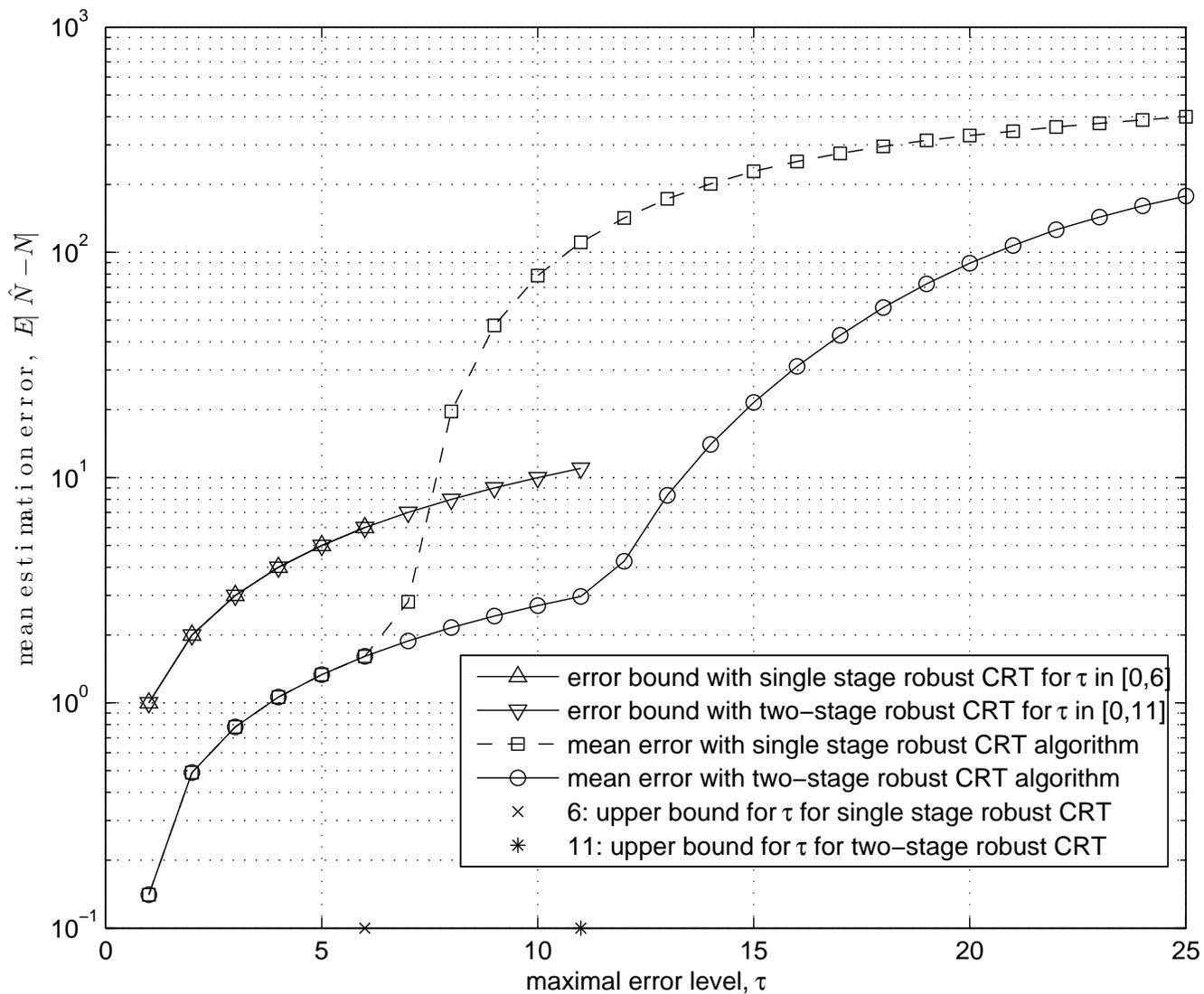}\\
  \caption{Mean estimation error and theoretical error bound
 comparison using the single stage robust CRT and the two-stage robust CRT}\label{figtwo}
\end{figure}

\end{document}